\newcommand{\IC}{\mathbb{C}}
\newcommand{\IR}{\mathbb{R}}
\newcommand{\IEE}{\mathscr{E}}
\newcommand{\question}[1]{\leavevmode{\marginpar{\tiny%
$\hbox to 0mm{\hspace*{-0.5mm}$\leftarrow$\hss}%
\vcenter{\vrule depth 0.1mm height 0.1mm width \the\marginparwidth}%
\hbox to 0mm{\hss$\rightarrow$\hspace*{-0.5mm}}$\\\relax\raggedright #1}}}
\newcommand{\ILL}{\mathscr{L}}
\newcommand{\IFF}{\mathcal{F}}
\newcommand{\dom}{\mathrm{Dom}}
\newcommand{\m}{\mathfrak{m}}
\newcommand{\PP}{\mathfrak{P}}
\newcommand{\IN}{\mathbb{N}}
\newcommand{\f}{\frac}
\newcommand{\nn}{\nonumber}
\theoremstyle{plain}            
\newtheorem{theorem}{theorem}[section]
\newtheorem{Lemma}[theorem]{Lemma}
\newtheorem{Corollary}[theorem]{Corollary}
\newtheorem{Theorem}[theorem]{Theorem}
\newtheorem{Proposition}[theorem]{Proposition}
\theoremstyle{definition}       
\newtheorem{Definition}[theorem]{Definition}
\newtheorem{A^{x,y}ssumption}[theorem]{A^{x,y}ssumption}
\newtheorem{Remark}[theorem]{Remark}
\begin{document}

\setcounter{page}{1}

\begin{titlepage}

\author{Batu G\"uneysu}
\address{Institut für Mathematik, Humboldt-Universit\"at zu Berlin, Rudower Chaussee 25, 12489 Berlin}
\email{gueneysu@math.hu-berlin.de}

\title[Geometry of molecular Schr\"odinger semigroups]{$\mathrm{RCD}^*(K,N)$ spaces and the geometry of multi-particle Schr\"odinger semigroups}

\end{titlepage}

\begin{abstract} Let $(X,\mathfrak{d},\m)$ be an $\mathrm{RCD}^*(K,N)$ space for some $K\in\IR$, $N\in [1,\infty)$, let $H$ be the self-adjoint Laplacian induced by the underlying Cheeger form. Given $\alpha\in [0,1]$ we introduce the $\alpha$-Kato class of potentials on $(X,\mathfrak{d},\m)$, and given a potential $V:X\to \IR$ in this class, we denote with $H_V$ the natural self-adjoint realization of the Schr\"odinger operator $H+V$ in $L^2(X,\m)$. We use Brownian coupling methods and perturbation theory to prove that for all $t>0$ there exists an explicitly given constant $A(V,K,\alpha,t)<\infty$, such that for all $\Psi\in L^{\infty}(X,\m)$, $x,y\in X$ one has
\begin{align*}
\big|e^{-tH_V}\Psi(x)-e^{-tH_V}\Psi(y)\big|\leq   A(V,K,\alpha,t) \|\Psi\|_{L^{\infty}}\mathfrak{d}(x,y)^{\alpha}.
\end{align*}
In particular, all $L^{\infty}$-eigenfunctions of $H_V$ are globally $\alpha$-H\"older continuous. This result applies to multi-particle Schr\"odinger semigroups and, by the explicitness of the H\"older constants, sheds some light into the geometry of such operators.
\end{abstract}  

\maketitle 

\vspace{3mm}

\emph{\Large{Dedicated to the memory of Kazumasa Kuwada.}}

\vspace{3mm}
\section{Introduction}

Among several other fundamental results, T. Kato proved the following result in 1957 in his seminal paper on mathematical quantum mechanics \cite{Kato}: consider the multi-particle Schr\"odinger operator $H_V=-\Delta+V$ in $L^2(\IR^{3m})$ with a potential $V:\IR^{3m}\to \IR$ of the form
$$
V(x)=\sum_{1\leq j\leq m } V_{j}(\mathbf{x}_j)+ \sum_{1\leq j<k\leq m } V_{jk}(\mathbf{x}_j-\mathbf{x}_k)\quad \text{with $V_j,V_{jk}\in L^{q}(\IR^{3})+L^{\infty}(\IR^{3})$}
$$
for some $q\geq 2$. Then all eigenfunctions (or more generally, elements of $\cap_n\dom(H_V^n)$, where $H_V^n$ denotes the $n$-th power of $H_V$) of $H_V$ are globally $\alpha$-H\"older continuous for all $0<\alpha< 2-3/q$. In particular, applying this result to 
$$
V_j(\mathbf{x}):=-\sum^l_{i=1}Z_i/|\mathbf{x}-\mathbf{R}_i|,\quad V_{jk}(\mathbf{x}):=1/|\mathbf{x}|
$$
one finds that the eigenfunctions of any molecular Schr\"odinger operator having $m$ electrons, $l$ nuclei and having its $i$-th nucleus sitting in $\mathbf{R}_i$ with $\sim Z_i$ protons is globally $\alpha$-H\"older continuous for all $0<\alpha< 1$. The fact that the above class of potentials depends on the \emph{linear} surjective maps 
$$
\pi_{ij}:\IR^{3m}\longrightarrow \IR^{3},\quad \pi_{ij}(\mathbf{x}_1,\dots,\mathbf{x}_m):= \mathbf{x}_i-\mathbf{x}_j, 
$$
and the fact that Kato's proof heavily relies on the Fourier transform, raise the following natural question: \vspace{2mm}

\emph{What is the geometry behind Kato's global regularity result?}\vspace{2mm}

Rather than studying regularity properties of eigenfunctions, we take the approach of studying $L^{\infty}\to C^{0,\alpha}$ mapping properties of the underlying \emph{Schr\"odinger semigroup} $e^{-tH_V}$, $t>0$, following B. Simon's seminal paper \cite{Simon}. This procedure relies on the spectral theorem, showing that $H_V\Psi =\lambda\Psi$ implies $\Psi=e^{t\lambda} e^{-tH_V}\Psi$ for all $t\geq 0$.\vspace{2mm}

The global mapping property $L^{\infty}\to C^{0,\alpha}$ of a Schr\"odinger semigroup $e^{-tH_V}$ is well known to be delicate even for $V=0$ in Riemannian geometry: indeed, the heat semigroup on an arbitrary noncompact Riemannian manifold $M$ needs not be globally Lipschitz smoothing, while it is, if $M$ is complete with a Ricci curvature bounded from below.  \vspace{1mm}

Being motivated by this fact, we pick $\mathrm{RCD}^*(K,N)$ spaces as our state spaces in this paper, where $K\in\IR$, $N\in [1,\infty)$. These are metric measure spaces having a linear heat flow, and a Ricci curvature bounded from below by $K$ and dimension $\leq N$ in the sense of $\mathrm{CD}^*(K,N)$ spaces, that have been introduced by Bacher/Sturm in \cite{bacher}. It should be noted here that, in general, the $\mathrm{CD}^*(K,N)$ implies the original $\mathrm{CD}(K,N)$ condition by Sturm \cite{sturm} (see also Lott/Villani \cite{lott}), while recently Cavalletti/Milman \cite{cavaletti2} have shown that in case the underlying measure is finite, the $\mathrm{RCD}^*(K,N)$ condition is equivalent to the conjunction of the $\mathrm{CD}(K,N)$ condition and the linearity of the heat flow (this is the $\mathrm{RCD}(K,N)$ condition). Moreover, the $\mathrm{RCD}^*(K,N)$ condition implies \cite{erbar} the $\mathrm{RCD}(K,\infty)$ condition, which has been originally introduced by Ambrosio/Gigli/Savar\'{e} \cite{ags2}. The essential feature of these spaces for our purposes is that the underlying (linear) heat semigroup is globally Lipschitz smoothing \cite{ags2,ags}. We refer the reader to Remark \ref{sswqq} below and the references therein for examples of such spaces.\vspace{1mm}

Given an $\mathrm{RCD}^*(K,N)$ space $X\equiv (X,\mathfrak{d},\m)$ and $0\leq\alpha\leq 1$ we introduce a new class $\mathcal{K}^{\alpha}(X)$ of potentials $V:X\to\IR$ which we call the \emph{$\alpha$-Kato class} of $(X,\mathfrak{d},\m)$ (cf. Definition \ref{kat}), and which refines the usual Kato class $\mathcal{K}(X)$ in the sense that $\mathcal{K}(X)=\mathcal{K}^{0}(X)$. For any $V\in \mathcal{K}^{\alpha}(X)$, $\alpha\in (0,1]$ it turns out that one can naturally define a Schr\"odinger semigroup $e^{-tH_V}$ in $L^2(X)$. Our main result (cf. Theorem \ref{haupt1} below) states that this Schr\"odinger semigroup has the smoothing property 
$$
e^{-tH_V}:L^{\infty}(X)\to C^{0,\alpha}(X)\quad\text{ for all $t>0$},
$$
with a H\"older constant depending explicitly on $K$, $\alpha$, $t$, $V$, but not on $N$. The fact that the Hölder constant does not depend on $N$ indicates that our main result should actually hold true on $\mathrm{RCD}(K,\infty)$ spaces. The finite dimensionality enters our proof only through the pointwise existence of the Wiener measures (cf. Remark \ref{asllp} below).\vspace{1mm}

Theorem \ref{haupt1} is fundamentally new even for the $\mathrm{RCD}^*(0,N)$ space induced by the Euclidean $\IR^N$ with its Lebesgue measure, in the sense that it can deal with more general potentials than in Kato's paper and that it provides explicit constants. To the best of our knowledge, Theorem \ref{haupt1} can only be compared with Theorem B.3.5 in \cite{Simon}, where B. Simon has proved a \emph{local} $\alpha$-H\"older smoothing result for Schr\"odinger semigroups in $\IR^N$ under an $\alpha$-dependent Kato type assumption on the potential.\vspace{1mm}

As an application of our main result, we prove a H\"older smoothing result for multi-particle Schr\"odinger semigroups with $L^q$-potentials on Riemannian manifolds having a Ricci curvature bounded from below by a constant, where the above maps $\pi_{ij}$ are replaced by very general surjective Riemannian submersions. This yields a natural generalization of Kato's result to the Riemannian setting (cf. Corollary \ref{sub}) and ultimately explains the geometry behind Kato's Euclidean result.\vspace{1mm}

Finally, we apply our results to produce explicit $L^r(\IR^{3m})\to C^{0,\alpha}(\IR^{3m})$ bounds for the Schr\"odinger semigroup $e^{-tH_{V_{R,Z}}}$ corresponding to a molecule (here, as bove, $R$ is the location vector of the nuclei and $Z$ the corresponding charge vector), where $r\in [1,\infty]$, $\alpha\in (0,1)$. Our methods produce an explicit H\"older constant that has an $\alpha$-dependence of the form 
$$
C_{m,Z}t^{-\frac{3m}{2r}}e^{C_{R,Z}t}\Big(2^{1-\alpha}t^{-\alpha/2}+   \frac{(t/4)^{-\alpha/2+1/2}}{1/2-\alpha/2}e^{C_{R,Z}t}\Big),
$$
which shows that one cannot take $\alpha \nearrow 1$ in order to obtain $L^r(\IR^{3m})\to C^{0,1}(\IR^{3m})$ (= Lipschitz) estimates. This fact is consistent with Kato's result and the fact that in this case
$$
V_{R,Z}\in\Big( \bigcap_{\alpha\in (0,1)}\mathcal{K}^{\alpha}(\IR^{3m}) \Big)\setminus \mathcal{K}^{1}(\IR^{3m}).
$$
On the other hand, Kato proves 'by hand' that molecular Schr\"odinger semigroups map $L^r(\IR^{3m})\to C^{0,1}(\IR^{3m})$, a fact which becomes rather mysterious from the point of view of our probabilistic methods and which raises the following open question:\vspace{2mm}

\emph{Is there a 'probabilistic' proof of the smoothing property $e^{-tH_{V_{R,Z}}}:L^r(\IR^{3m})\to C^{0,1}(\IR^{3m})$ with $V_{R,Z}$ the potential of a molecule as above? }

\vspace{2mm}

We close the introduction with some remarks concerning our proof of Theorem \ref{haupt1}, which comes in two steps: without making use of Fukushima's \cite{fuku} or Ma/R\"ockner's \cite{roeck} abstract theory for (quasi-) regular Dirichlet forms\footnote{which yields the existence of the Wiener measure for quasi-almost every initial point} we first show that \emph{for all} initial points $x\in X$ there is a natural Wiener measure or Brownian motion measure $\PP^x$ on the space of continuous paths $[0,\infty)\to  X$, which allows us to obtain maximal Brownian couplings for all pairs of initial points $x,y$ \footnote{We refer the reader to Section \ref{swwwa} below for the basics of couplings of diffusions.}. In particular, the coupling time of such a coupling can be estimated by $\mathfrak{d}(x,y)$ times the constant from the aforementioned $L^{\infty}(X)\to C^{0,1}(X)$ smoothing property of the heat semigroup on $X$, and this fact allows us to deduce that the heat semigroup on $X$ is $L^{\infty}(X)\to C^{0,\alpha}(X)$ smoothing for \emph{all} $0< \alpha\leq 1$. Being equipped with this self-improvement property, we use perturbation theory to show that 
$$
e^{-tH_V}:L^{\infty}(X)\to C^{0,\alpha}(X)\quad\text{ for $V$'s in $\mathcal{K}^{\alpha}(X)$}
$$
to complete the proof.\vspace{2mm}

\emph{Acknowledgements:} The author would like to thank L. Ambrosio, Z.-Q. Chen, K. Kuwae, G. Savar\'{e} and the anonymous referees for several corrections and helpful remarks.







\section{Couplings of diffusions}\label{swwwa}

In the sequel, given probability measures $\mu_1$ and $\mu_2$ on a measurable space $(\Omega,\IFF)$, we denote with $\mathscr{C}(\mu_1,\mu_2)$ the set of all couplings of $\mu_1$ and $\mu_2$, that is, the set of all probability measures $\mu$ on $(\Omega\times \Omega,\IFF\otimes \IFF)$ such that $(\pi_j)_*\mu= \mu_j$ for $j=1,2$, where $\pi_j:\Omega\times\Omega\to \Omega$ denotes the projection onto the $j$-th component. Furthermore, we denote with 
$$
\delta(\mu_1,\mu_2):= \sup_{B\in \IFF}|\mu_1(B)-\mu_2(B)|
$$
the total variation distance. \vspace{2mm}

Let $X\equiv (X,\mathfrak{d})$ be a complete separable locally compact metric space. The space $C([0,\infty),X)$ of continuous paths $\omega:[0,\infty)\to X$ is always equipped with its topology of locally uniform convergence. By a diffusion on $X$ we will simply understand a family of Borel probability measures $\PP=(\PP^z)_{z\in X}$ on $C([0,\infty),X)$ such that 
\begin{itemize}
	\item $\PP^z\{\omega:\omega(0)=z\}=1$ for all $z\in X$,
	\item the map $z\mapsto \PP^z(A)$ is Borel measurable, for all Borel sets $A\subset C([0,\infty),X)$, 
	\item $\PP$ has the strong Markov property with respect to the natural filtration of $C([0,\infty),X)$.
	\end{itemize}

In the above situation, given $x,y\in X$, a continuous $X\times X$-valued process 
$$
(\mathbb{X},\mathbb{Y}): (\Omega ,\IFF,P)\longrightarrow C([0,\infty), X\times X)
$$
is called a \emph{coupling from $x$ to $y$ of $\PP$}, if $(\mathbb{X},\mathbb{Y})_*P\in \mathscr{C}(\PP^x,\PP^y)$, and then the coupling is called \emph{successful}, if with the \emph{coupling time}
$$
\tau(\mathbb{X},\mathbb{Y}):=\inf\{s>0: \mathbb{X}_{s'}=\mathbb{Y}_{s'}\>\text{ for all $s'\geq s$}\}:\Omega\longrightarrow [0,\infty],
$$
one has $P\{\tau(\mathbb{X},\mathbb{Y})=\infty\}=0$, and \emph{maximal}, if one has
 \begin{align}\label{w1}
P   \{ \tau(\mathbb{X},\mathbb{Y})>t\}= \frac{1}{2}\delta\big((\mathbb{X}_t)_*P,(\mathbb{Y}_t)_*P\big)
 \text{ for all $t>0$.}
\end{align}

An abstract result by Sverchkov/Smirnov \cite{smirnov} for couplings of cadlag Markov processes states that for all $x,y\in X$ there exists a maximal coupling of $\PP$ from $x$ to $y$. The following result, which should be well known to the experts (cf. \cite{sturm0} for a special case of part a)), will allow us to switch from global Lipschitz-smoothing to global H\"older-smoothing results later on, a rather subtle business on noncompact spaces: 

\begin{Proposition}\label{kazu} a) Assume  
\begin{align}\label{abs}
(\mathbb{X},\mathbb{Y}): (\Omega, \IFF,P)\longrightarrow C([0,\infty), X\times X)
\end{align}
is a coupling of $\PP$ from $x\in X$ to $y\in X$. Then for all $t>0$ one has 
 $$
P   \{ \tau(\mathbb{X},\mathbb{Y})>t\}\geq \frac{1}{2}\delta\big((\mathbb{X}_t)_*P,(\mathbb{Y}_t)_*P\big).
$$
b) For every function $F:(0,\infty)\to (0,\infty)$ the following statements are equivalent:
\begin{itemize}
	\item[i)] For all $x,y\in X$ there exists a coupling  
	\begin{align}\label{abs2}
(\mathbb{X},\mathbb{Y}): (\Omega, \IFF,P)\longrightarrow C([0,\infty), X\times X)
\end{align}
of $\PP$ from $x$ to $y$ with 
	\begin{align}\label{aaaadre}
P   \{ \tau(\mathbb{X},\mathbb{Y})>t\}\leq \frac{1}{2}F(t)\mathfrak{d}(x,y)\quad\text{ for all $t>0$}.
\end{align}
\item[ii)] For all bounded $f:X\to \IR$, $t>0$, $x,y\in X$, one has  
$$
 \left|\int f(\omega(t)) \PP^x( d \omega)-\int f(\omega(t)) \PP^y( d \omega) \right|\leq F(t)\mathfrak{d}(x,y)\left\|f\right\|_{\infty}.
$$
\item[iii)] For all bounded $f:X\to \IR$, $t>0$, $x,y\in X$, $\alpha\in (0,1]$ one has  
$$
 \left|\int f(\omega(t)) \PP^x( d \omega)-\int f(\omega(t)) \PP^y( d \omega) \right|\leq   F(t)^{\alpha} 2^{1-\alpha} \mathfrak{d}(x,y)^{\alpha} \left\|f\right\|_{\infty}.
$$
\end{itemize}
\end{Proposition}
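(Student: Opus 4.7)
For part (a), I rely on the defining property of the coupling time: on the event $\{\tau(\mathbb{X},\mathbb{Y})\leq t\}$ one has $\mathbb{X}_t=\mathbb{Y}_t$, so for every Borel $B\subset X$ the two contributions $P\{\mathbb{X}_t\in B,\tau\leq t\}$ and $P\{\mathbb{Y}_t\in B,\tau\leq t\}$ coincide. Subtracting the analogous decompositions of $P\{\mathbb{X}_t\in B\}$ and $P\{\mathbb{Y}_t\in B\}$ yields
$$\big|(\mathbb{X}_t)_*P(B)-(\mathbb{Y}_t)_*P(B)\big|=\big|P\{\mathbb{X}_t\in B,\tau>t\}-P\{\mathbb{Y}_t\in B,\tau>t\}\big|\leq P\{\tau>t\},$$
and taking the supremum over $B$ gives the claim (in fact with room to spare, since the factor $\tfrac12$ is not actually required here but is kept for consistency with \eqref{w1}).

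For part (b), the plan is to run the cycle (i)$\Rightarrow$(ii)$\Rightarrow$(iii)$\Rightarrow$(ii)$\Rightarrow$(i). The implication (i)$\Rightarrow$(ii) follows by rewriting the LHS of (ii) as $\bigl|\int(f(\mathbb{X}_t)-f(\mathbb{Y}_t))\,dP\bigr|$ via the marginal property of the coupling; since the integrand vanishes on $\{\tau\leq t\}$, this is at most $2\|f\|_\infty P\{\tau>t\}\leq F(t)\mathfrak{d}(x,y)\|f\|_\infty$. The step (ii)$\Rightarrow$(iii) is a standard $L^\infty$--Lipschitz interpolation: setting $u(x):=\int f(\omega(t))\,\PP^x(d\omega)$, I combine the trivial estimate $|u(x)-u(y)|\leq 2\|f\|_\infty$ with the Lipschitz estimate from (ii) via $\min(a,b)\leq a^\alpha b^{1-\alpha}$, producing the H\"older constant $F(t)^\alpha 2^{1-\alpha}$. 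The step (iii)$\Rightarrow$(ii) is immediate upon setting $\alpha=1$.

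The only substantive implication is (ii)$\Rightarrow$(i), where I invoke the Sverchkov/Smirnov abstract result \cite{smirnov} to obtain, for each pair $x,y\in X$, a \emph{maximal} coupling of $\PP$ from $x$ to $y$, satisfying \eqref{w1}. Applying (ii) with $f=\mathbf{1}_B$ ranging over Borel sets $B\subset X$ gives
$$\delta\big((\mathbb{X}_t)_*P,(\mathbb{Y}_t)_*P\big)\leq F(t)\mathfrak{d}(x,y),$$
and substituting this into \eqref{w1} yields precisely \eqref{aaaadre}. The main conceptual obstacle is this appeal to the existence of a maximal coupling, which requires the strong Markov, cadlag structure already assumed for $\PP$ at the start of the section; everything else in the argument amounts to elementary probabilistic bookkeeping and one-parameter interpolation.
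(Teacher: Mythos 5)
Your proof is correct and, apart from one step, follows the same route as the paper. Part (a), the implications (i)$\Rightarrow$(ii), (iii)$\Rightarrow$(ii), and (ii)$\Rightarrow$(i) coincide in substance with the paper's argument (in particular you correctly identify the Sverchkov--Smirnov maximal-coupling theorem as the only non-elementary ingredient in (ii)$\Rightarrow$(i), and you rightly observe that (a) actually delivers $\delta\leq P\{\tau>t\}$, stronger than what is stated).

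The one place where you genuinely deviate is (ii)$\Rightarrow$(iii). The paper first invokes its own (ii)$\Rightarrow$(i) to produce a coupling with $P\{\tau>t\}\leq\tfrac12 F(t)\mathfrak{d}(x,y)$, then uses $P\{\tau>t\}\leq P\{\tau>t\}^{\alpha}$ (valid because $P\{\tau>t\}\leq 1$ and $\alpha\leq 1$) inside the bound $|u(x)-u(y)|\leq 2\|f\|_{\infty}P\{\tau>t\}$. Your interpolation argument, $|u(x)-u(y)|\leq\min\big(F(t)\mathfrak{d}(x,y)\|f\|_{\infty},\,2\|f\|_{\infty}\big)\leq\big(F(t)\mathfrak{d}(x,y)\|f\|_{\infty}\big)^{\alpha}\big(2\|f\|_{\infty}\big)^{1-\alpha}$, produces the identical constant $2^{1-\alpha}F(t)^{\alpha}$, but does so without re-entering the coupling picture at all. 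This is a small but real simplification: it makes (ii)$\Leftrightarrow$(iii) a purely analytic interpolation fact, independent of the existence of maximal couplings, whereas in the paper that equivalence is entangled with the probabilistic machinery. Either version is fine; yours has slightly cleaner logical dependencies.
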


\begin{proof}a) Let $B\subset X$ be an arbitrary Borel set. Assuming w.l.o.g. that $P\{\mathbb{X}_t\in B\}\geq P\{\mathbb{Y}_t\in B\}$, we have 
\begin{align*}
P   \{ \tau(\mathbb{X},\mathbb{Y})>t\}\geq  P   \{ \mathbb{X}_t\neq \mathbb{Y}_t\}\geq P   \{ \mathbb{X}_t\in B, \mathbb{Y}_t\ne B \}\geq |P   \{ \mathbb{X}_t\in B \} -P   \{ \mathbb{Y}_t\in B \}|.
\end{align*}
b) ii) $\Rightarrow$ i): Given $x,y\in X$ we pick a maximal coupling (\ref{abs2}) of $\PP$ from $x$ to $y$. Then we have
\begin{align*}
P   \{ \tau(\mathbb{X},\mathbb{Y})>t\}= \frac{1}{2} \sup_{B\subset X, \text{$B$ Borel}} \left|\int 1_B(\omega(t)) \PP^x( d \omega)-\int 1_B(\omega(t)) \PP^y( d \omega) \right|\leq \frac{1}{2} F(t) \mathfrak{d}(x,y).
\end{align*}
i) $\Rightarrow$ ii): One can estimate as follows:
\begin{align}
\nn&\left|\int f(\omega(t)) \PP^x( d \omega)-\int f(\omega(t)) \PP^y( d \omega)\right|\\
\nn&=  \left|\int \big(f(\mathbb{X}_t)- f(\mathbb{Y}_t) \big) dP\right|\\
\nn&=\left|\int_{\{ \tau(\mathbb{X},\mathbb{Y})>t\}} \big(f(\mathbb{X}_t)- f(\mathbb{Y}_t) \big) dP\right|\\ 
\label{aapopoq}&\leq P   \{ \tau(\mathbb{X},\mathbb{Y})>t\} 2 \left\|f\right\|_{\infty}\\
\nn&\leq   F(t)\mathfrak{d}(x,y)\left\|f\right\|_{\infty}.
\end{align}
iii) $\Rightarrow$ ii): Trivial.\\
ii) $\Rightarrow$ iii): this follows from (\ref{aapopoq}), $P(\cdot)\leq 1$ and ii) $\Rightarrow$ i),
\begin{align*}
&\left|\int f(\omega(t)) \PP^x( d \omega)-\int f(\omega(t)) \PP^y( d \omega)\right|\\
&\leq P   \{ \tau(\mathbb{X},\mathbb{Y})>t\}^{\alpha} 2 \left\|f\right\|_{\infty}\\
&\leq   F(t)^{\alpha} 2^{1-\alpha} \mathfrak{d}(x,y)^{\alpha} \left\|f\right\|_{\infty},
\end{align*}
completing the proof.
\end{proof}

\section{Schr\"odinger semigroups on $\mathrm{RCD}^*(K,N)$ spaces}

Assume now $X\equiv (X,\mathfrak{d},\m)$ is a metric measure space, that is, $(X,\mathfrak{d})$ is a complete seperable geodesic metric space, $\m$ is a $\sigma$-finite Borel measure which is finite on all open balls $B(x,r)$. We assume also that $(X,\mathfrak{d})$ is a geodesic space and that $\m$ has a full support. In the sequel, as we have applications in quantum mechanics in mind, we understand the spaces $L^q(X)=L^q(X,\m)$ to be over $\IC$, and we consider sesquilinear maps to be antilinear in the first component. The $L^q(X)$-norms will simply be denoted with $\left\|\cdot\right\|_{L^q}$.\vspace{1mm}

The \emph{Cheeger form} 
$$
\mathrm{Ch}:L^2(X)\longrightarrow [0,\infty]
$$
on $X$ is defined to be the $L^2$-lower semicontinuous relaxation of the functional 
$$
\widetilde{\mathrm{Ch}}:L^2(X)\cap \mathrm{Lip}(X)\longrightarrow [0,\infty],\quad \widetilde{\mathrm{Ch}}(f):= \int_X   \Big( \limsup_{y\to x}\frac{|f(x)-f(y)|}{\mathfrak{d}(x,y)} \Big) ^2 \m(dx),
$$
where $\limsup_{y\to x}|f(x)-f(y)|\mathfrak{d}(x,y)^{-1}$ is set $0$ if $x$ is isolated. In other words, for all $f\in L^2(X)$ one sets
$$
\mathrm{Ch}(f):=\liminf_{n\to\infty}\left\{\widetilde{\mathrm{Ch}}(f_n): (f_n)_{n\in\IN}\subset L^2(X)\cap \mathrm{Lip}(X), \left\|f_n-f\right\|_{L^2}\to 0\>\text{as $n\to\infty$} \right\}
$$

One gets a functional $\IEE$ on $L^2(X)$ with domain of definition 
$$
\dom(\IEE):=\{f:\mathrm{Ch}(f)<\infty\}
$$ 
by setting 
$$
\IEE(f):=\mathrm{Ch}(f)\quad\text{for all $f\in\dom(\IEE)$}.
$$

\begin{Definition}
The metric measure space $X$ is called \emph{infinitesimally Hilbertian}, if $\IEE$ is a quadratic form. 
\end{Definition}

Let $\mathsf{P}(X)$ denote the set of all Borel probability measures on $X$ and let $\mathsf{P}_2(X)$ denote the elements of $\mathsf{P}(X)$ that have finite second moments. Given elements $\mu_0$ and $\mu_1$ in $\mathsf{P}_2(X)$, the $L^2$-\emph{Wasserstein distance} is given by
$$ 
W_2(\mu_0,\mu_1):=\inf\left\{\int_{X\times X}\mathfrak{d}(x,y)^2 q(dx,dy):\>\>q\in\mathscr{C}(\mu_0,\mu_1)\right\},
$$
and any minimizer of the above infimum is called and \emph{optimal coupling} of $\mu_0$ and $\mu_1$. Then $\mathsf{P}_2(X)$ together with the $L^2$-Wasserstein distance is again a complete seperable geodesic space (as $X$ is so). \vspace{1mm}

Given $K\in\IR$, $t\in [0,1]$, $N\in [1,\infty)$ define the function
$$
\sigma_{K,N}^{(t)}:[0,\infty)\longrightarrow \IR\cup \{\infty\}
$$
by
$$
\sigma_{K,N}^{(t)}(\theta):=
\begin{cases}
&\infty, \quad\text{if $K\theta^2\geq N\pi^2$}\\
&\frac{\sin(t\theta\sqrt{K/N})}{\sin(\theta\sqrt{K/N})},\quad \text{if $0<K\theta^2< N\pi^2$}\\
&t, \quad\text{if $K\theta^2=0$}\\
&\frac{\sinh(t\theta\sqrt{-K/N})}{\sinh(\theta\sqrt{-K/N})}, \quad\text{if $K\theta^2< 0$}.
\end{cases}
$$

\begin{Definition}
Given $K\in\IR$, $N\in [1,\infty)$, the metric measure space $X$ is called an \emph{$\mathrm{CD}^*(K,N)$ space}, if for all $\mu_0,\mu_1\in \mathsf{P}(X)$ with bounded support and $\mu_0,\mu_1\ll\m$, there exists an optimal coupling $q$ of them and a geodesic $(\mu_t)_{t\in [0,1]}$ in $\mathsf{P}_2(X)$ connecting them, such that
\begin{itemize}
\item $\mu_t\ll\m$ and $\mu_t$ has a bounded support for all $t\in [0,1]$,

\item for all $N'\in[N,\infty)$, $t\in [0,1]$ one has
\begin{align*}
&\int_X \rho_t(x)^{1-1/N'} \m(dx)\geq \\
&\int_{X\times X} \Big(\sigma_{K,N'}^{(1-t)}(\mathfrak{d}(x_0,x_1))\rho_0(x_0)^{-1/N'} +\sigma_{K,N'}^{(t)}(\mathfrak{d}(x_0,x_1))\rho_1(x_1)^{-1/N'}\Big)\>q(dx_0, dx_1), 
\end{align*}
where $\rho_t$ denotes the Radon-Nikodym density of $\mu_t$ with respect to $\m$ for all $t\in [0,1]$. 
\end{itemize}
\end{Definition}

\begin{Definition}
Given $K\in\IR$, $N\in [1,\infty)$, the metric measure space $X$ is called an \emph{$\mathrm{RCD}^*(K,N)$ space}, if it is an infinitesimally Hilbertian $\mathrm{CD}^*(K,N)$ space.
\end{Definition}

\begin{Remark}\label{sswqq}1. Several equivalent characterizations of the $\mathrm{RCD}^*(K,N)$ condition have been obtained by Erbar/Kuwada/Sturm \cite{erbar} and by Cavalletti/Milman \cite{cavaletti2}. In particular, it is shown in \cite{cavaletti2} that the $\mathrm{RCD}^*(K,N)$ condition, is equivalent to the $\mathrm{RCD}(K,N)$ condition, if the underlying measure is finite. Moreover, natural tensorization, stability, and local-to-global results of $\mathrm{RCD}^*(K,N)$ spaces have been established in \cite{bacher, erbar}. \vspace{1mm}

2. On a smooth geodesically complete Riemannian manifold $M$ equipped with its geodesic distance and its Riemannian volume measure one has $\dom(\IEE)=W^{1,2}(M)$ and 
$$
\IEE(f_1,f_2)=\int_M (\nabla f_1(x) ,\nabla f_2(x) ) \m(dx)   .
$$
Moreover, $M$ is $\mathrm{RCD}^*(K,N)$, if and only if $\mathrm{Ric}\geq K$ and $\dim(M)\leq N$ \cite{bacher}. In particular, in this case one has $H=-\Delta$, the (unique self-adjoint realization of the) Laplace-Beltrami operator. \vspace{1mm}

3. A new interesting class of possibly very singular $\mathrm{RCD}^*(K,N)$ spaces (having a finite measure) has been found by Bertrand/Ketterer/Mondello/Richard in \cite{ketterer}: there the authors show that every compact smoothly stratified space $X$ with an iterated edge metric on its regular part (which is a smooth open Riemannian manifold) canonically induces a metric measure space, which is $\mathrm{RCD}(K,N)$ if and only if $\dim(X)\leq N$ and $X$ has a singular Ricci curvature bounded from below by $K$ in the sense of \cite{ketterer}.
\end{Remark}

\emph{We assume from here on that $X$ is an $\mathrm{RCD}^*(K,N)$ space for some $K\in\IR$ and some $N\in [1,\infty)$.}
\vspace{3mm}

As pointed out by Kuwada/Kuwae \cite{kuwada}, then $X$ is locally compact (this relies on the validity of the Bishop-Gromov volume estimate \cite{bacher} on $\mathrm{RCD}^*(K,N)$ spaces; see also the proof of Lemma \ref{assa} below) and in fact $\IEE$ becomes a regular strongly local Dirichlet form in $L^2(X)$. This follows from combining the facts that any $\mathrm{RCD}^*(K,N)$ is an $\mathrm{RCD}(K,\infty)$ space \cite{erbar}, that the Cheeger energy on $\mathrm{RCD}(K,\infty)$ space is a quasi-regular Dirichlet form \cite{savare}, and that square-integrable Lipschitz functions are dense in the domain of definition of the Cheeger energy with respect to the energy norm \cite{ohje} (which ensures the regularity of $\IEE$).\vspace{1mm}

If we denote by $H$ the nonnegative self-adjoint operator in $L^2(X)$ corresponding to $\IEE$, \emph{the Laplacian on $X$}, the \emph{heat semigroup} 
$$
(e^{-t H})_{t\geq 0}\subset \ILL(L^2(X)):=\{\text{bounded linear operators $L^2(X)\to L^2(X)$}\}
$$
on $X$ is defined via functional calculus. Moreover, one has the mapping property
$$
e^{-t H}:L^2(X)\longrightarrow C(X)\quad\text{ for all $t>0$},
$$
and there exists a uniquely determined continuous map
$$
(0,\infty)\times X\times X \ni (t,x,y)\longmapsto p(t,x,y)\in [0,\infty),
$$
the \emph{heat kernel of $H$}, with the following property: for all $f\in L^2(X)$, $t>0$, $x\in X$, one has 
	$$
	e^{-t H}f(x) =\int_X p(t,x,y)f(y) \m(dy). 
$$
In addition, the heat kernel has the following properties for all $s,t>0$, $x,y\in X$:
\begin{itemize}
	\item symmetry: $p(t,y,x)= p(t,x,y)$,
	\item Chapman-Kolmogorov: $p(t+s,x,y) =\int_X p(t,x,z)p(s,z,y)\m(dz),$
	\item conservativeness: $\int_X p(t,x,z)\m(dz)=1$.
\end{itemize}

The asserted regularity facts on the heat kernel follow from abstract results on Dirichlet spaces obtained by Sturm in \cite{sturm2,sturm3}, in combination with the validity of appropriate local Poincar\'{e} inequalities in finite dimensional $\mathrm{RCD}^*$ spaces. The latter have been established by Rajala \cite{rajala}. The asserted conservativeness follows from the abstract criterion given in \cite{sturm1} and the validity of the Bishop-Gromov volume estimate.

\begin{Lemma}\label{assa} There exists a unique diffusion $\PP$ on $X$ such that for all $x_0\in X$, $n\in\IN$, $0<t_1< \dots< t_n$, and all Borel sets $A_1,\dots,A_n\subset M$ one has
\begin{align}\nonumber
&\PP^{x_0}\{\omega: \omega(t_1)\in A_1,\dots,\omega(t_n)\in A_n\}\\\label{shgi}
& =\int\cdots \int 1_{A_1}(x_1)p(\delta_0 ,x_0,x_1) \cdots
 1_{A_n}(x_n) p(\delta_{n-1} ,x_{n-1},x_n) \m(dx_1)\cdots \m(dx_n),
\end{align}
where $\delta_j:=t_{j+1}-t_j$, $t_0:=0$. Moreover, for every $x\in X$, $\alpha<1/2$, the measure $\PP^x$ is concentrated on locally $\alpha$-H\"older continuous paths and is called \emph{Brownian motion measure} or \emph{Wiener measure} with initial point $x$.
\end{Lemma}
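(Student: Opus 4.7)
The plan is to construct $\PP^x$ in two stages: first as a measure on the full product space $X^{[0,\infty)}$ via Kolmogorov's extension theorem, and then to show it is actually concentrated on $C([0,\infty),X)$ via Kolmogorov's continuity criterion applied to moment estimates that come from Gaussian heat kernel bounds.

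For fixed $x_0\in X$, the right-hand side of (\ref{shgi}) defines a consistent family of finite-dimensional marginals on $X^{\{t_1,\dots,t_n\}}$ thanks to Chapman--Kolmogorov and conservativeness. Since $(X,\mathfrak{d})$ is Polish, Kolmogorov's extension theorem produces a unique Borel probability measure $\widetilde{\PP}^{x_0}$ on $X^{[0,\infty)}$ with these marginals, and the uniqueness of the diffusion follows by a standard $\pi$--$\lambda$/monotone class argument. To upgrade $\widetilde{\PP}^{x_0}$ to a measure on $C([0,\infty),X)$, I would invoke the Gaussian upper heat kernel bound available on $\mathrm{RCD}^*(K,N)$ spaces (Sturm, Jiang--Li--Zhang), of the form
\[
p(t,x,y)\leq \frac{C_T}{\m(B(x,\sqrt{t}))}\,\exp\!\Big(-\frac{\mathfrak{d}(x,y)^2}{C_T t}\Big),\qquad 0<t\le T,
\]
together with the Bishop--Gromov volume comparison to control $\m(B(x,\sqrt{t}))$ on compacts. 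A direct integration then yields, for every $p\geq 2$,
\[
\int_X \mathfrak{d}(x,y)^p\,p(t,x,y)\,\m(dy)\leq C_{p,K,N,T,x}\,t^{p/2}
\]
uniformly as $x$ varies in a compact set. Combining this with the Markov property gives
\[
\int \mathfrak{d}(\omega(s),\omega(t))^p\,\widetilde{\PP}^{x_0}(d\omega)\leq C'_{p,K,N,T,x_0}\,|t-s|^{p/2},
\]
and Kolmogorov's continuity criterion applied on each $[0,T]$ shows that $\widetilde{\PP}^{x_0}$ is concentrated on paths that are locally $\alpha$-H\"older continuous for every $\alpha<\tfrac12-\tfrac1p$; letting $p\to\infty$ yields the stated range $\alpha<1/2$. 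Restricting $\widetilde{\PP}^{x_0}$ to $C([0,\infty),X)$ defines $\PP^{x_0}$.

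It then remains to verify the three axioms of a diffusion from Section \ref{swwwa}. The identity $\PP^{x_0}\{\omega(0)=x_0\}=1$ is immediate from (\ref{shgi}). Borel measurability of $z\mapsto\PP^z(A)$ follows from joint continuity of $(x,y)\mapsto p(t,x,y)$ by a monotone class argument on cylinder sets. The strong Markov property is a standard consequence of the Feller property of $(e^{-tH})_{t\ge 0}$ together with the right-continuity of paths, via approximation of a stopping time by a decreasing sequence of discrete-valued ones.

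The main obstacle I foresee is packaging the Gaussian heat kernel bound together with a uniform small-ball lower bound for $\m(B(x,\sqrt{t}))$ into a clean, $x$-local moment estimate; both ingredients, however, are available in the $\mathrm{RCD}^*(K,N)$ literature already cited in the preceding paragraph of the paper, so this is mostly a matter of bookkeeping rather than a genuine conceptual hurdle. The conceptually novel point of the lemma is that this construction works pointwise in every $x\in X$, bypassing the quasi-regular Dirichlet form machinery of Fukushima and Ma--R\"ockner (which would only provide $\PP^x$ for quasi-every $x$); it is exactly this pointwise existence that later allows one to produce a maximal coupling from $x$ to $y$ for every pair of starting points.
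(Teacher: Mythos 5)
Your proposal follows essentially the same route as the paper: build $\PP^{x_0}$ by Kolmogorov extension from Chapman--Kolmogorov consistency, upgrade to a continuous modification via Kolmogorov's continuity criterion using moment bounds coming from the Li--Yau heat kernel upper estimate of Jiang--Li--Zhang together with Bishop--Gromov volume doubling, and finally verify the Markov/strong Markov property via Feller continuity of $x\mapsto e^{-tH}f(x)$. Two small remarks: (i) the paper works only with the second moment and asserts $\int_X \mathfrak{d}(x,y)^2\,p(t,x,y)\,\m(dy)\le C_T t^2$, which cannot hold uniformly for small $t$ even in the model case $X=\IR^N$ (where the left-hand side scales like $t$), so your use of $p$-th moments $\lesssim t^{p/2}$ with $p\to\infty$ is the standard and robust way to obtain the full range $\alpha<1/2$; (ii) when passing from the one-step moment bound to the bound on $\int \mathfrak{d}(\omega(s),\omega(t))^p\,\widetilde{\PP}^{x_0}(d\omega)$ you integrate against the non-compactly supported density $p(s,x_0,\cdot)$, so you actually need the moment estimate uniformly over all $x\in X$, not merely on compacts --- luckily the Li--Yau and Bishop--Gromov constants are uniform on an $\mathrm{RCD}^*(K,N)$ space, which is exactly what the paper uses.
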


\begin{Remark}\label{asllp}
As $\IEE$ is a regular local Dirichlet form, it follows immediately from Fukushima's theory \cite{fuku} that we can pick a diffusion $\PP$ on $X$ which satisfies (\ref{shgi}) for $\IEE$-quasi every $x_0\in X$. However, from the author's point of view, this approach would make the formulation of Proposition \ref{kazu} b) and of our main result Theorem \ref{haupt1} below somewhat artificial. Such a 'quasi-sure' formulation could even be carried out for $\mathrm{RCD}(K,\infty)$ spaces using the Ma-R\"ockner correspondence \cite{roeck} between local quasi-regular Dirichlet forms and diffusions, as the main ingredient of our machinery (which is Corollary \ref{am} below) is an $L^{\infty}$-to-Lipschitz smoothing result for the heat semigroup by Ambrosio/Gigli/Savar\'{e} \cite{ags}, which is valid in the infinite dimensional case $N=\infty$, too (and also for the fractional metric measure spaces considered in \cite{alonso}). Instead we establish the correspondence of $\IEE$ with a \emph{pointwise} uniquely determined diffusion by hand, using classical methods from Markov processes in combination with a Li-Yau heat kernel upper estimate for $\mathrm{RCD}^*(K,N)$ spaces, 
$$
p(t,x,y)\leq C \m\big(B(x,\sqrt{t})\big)^{-1} e^{-\frac{\mathfrak{d}(x,y)^2}{Ct}}\quad\text{for all $0<t<1$, $x,y\in X$,}
$$
which has been obtained by Jiang/Li/Zhang \cite{jiang}, and which in general needs $N<\infty$. We would like to mention in this context that a variant of the Li-Yau heat kernel bound for $\mathrm{RCD}(K,\infty)$ spaces, namely an estimate of the form
$$
p(t,x,y)\leq C \frac{ e^{-\frac{\mathfrak{d}(x,y)^2}{Ct}}}{\sqrt{\m\big(B(x,\sqrt{t})\big)\m\big(B(y,\sqrt{t})\big)}}\quad\text{for all $0<t<1$, $x,y\in X$,}
$$ 
has been recently established by Tamanini in \cite{taman}. It seems, however, that this estimate does not suffice to carry out the arguments given in the proof below. The point is that in the infinite dimensional case, one cannot replace 
$$
1/\sqrt{\m\big(B(x,\sqrt{t})\big)\m\big(B(x,\sqrt{t})\big)}
$$
by
$$
\mathrm{const.} \times  \m\big(B(x,\sqrt{t})\big)^{-1},
$$
which in the finite dimensional case can be done using the Bishop-Gromov volume estimate.
\end{Remark}

\begin{proof}[Proof of Lemma \ref{assa}] The existence of a unique probability measure $\PP^{x_0}$ on $C([0,\infty),X)$ satisfying (\ref{shgi}) and giving full measure on $\alpha$-H\"older continuous paths follows from Kolmogorov's theorems on consistency and the existence of a (H\"older)-continuous modification, if we can show that for all $T>0$ there exists $C_T>0$, such that for all $0<t_1<t_2\leq T$ one has
\begin{align}\label{swayy}
\int_Xp(t_1,x_1,x_0)\int_X \mathfrak{d}(x_1,x_2)^2 p(t_2-t_1,x_1,x_2)  \m(dx_2)\m(d x_1)\leq C_T (t_2-t_1)^2.
\end{align}
In order to show (\ref{swayy}), we are going to prove that there exists $C_T>0$ such that for all $x\in X$, $0<t\leq T$ one has  
$$
\int_X \mathfrak{d}(x,y)^2 p(t,x,y)  \m(dy)\leq C_T t^2.
$$
The Li-Yau upper heat kernel estimate implies the existence of $C'_T>0$ (which only depends on $T$, $N$ and $K$) and $C>0$ (which only depends on $N$ and $K$) such that for all $x,y\in X$, $0<t< T$ one has
$$
p(t,x,y)\leq C'_T \m(x,\sqrt{t})^{-1} e^{-\frac{\mathfrak{d}(x,y)^2}{Ct}},\quad\text{with $\m(x,\sqrt{t}):= \m\big(B(x,\sqrt{t})\big)$.}
$$
Moreover, one has the following local volume doubling property: there exists a constant $C''_T>0$ (which only depends on $T$, $K$, $N$) such that for all $0<t<T$, $x\in X$, one has $\m(x,2t)\leq C''_T\m(x,t)$. As pointed out in \cite{kuwada}, local volume doubling follows from the Bishop-Gromov volume inequality, which also implies that $X$ is locally compact. By a standard argument (cf. Lemma 5.27 in \cite{saloff-buch}), local volume doubling implies the existence of constants $\nu_T\geq 1$, $C'''_T, C''''_T>0$ (which only depend on $C''_T$), such that for all $s,t>0$, $x\in X$ one has
\begin{align}\label{salo}
\frac{\m(x,t)}{\m(x,s)}\leq C'''_T \left(\frac{t}{s}\right)^{\nu_T} e^{C''''_T \frac{t}{T}}.
\end{align}
Now fix arbitrary $0<t\leq T$, $x\in X$. Then Li-Yau implies
\begin{align*}
\int_X \mathfrak{d}(x,y)^2 p(t,x,y)  \m(dy)&\leq C'_T\m(x,\sqrt{t})^{-1}\int_{B(x,t)}\mathfrak{d}(x,y)^2 e^{-\frac{\mathfrak{d}(x,y)^2}{Ct}}\m(dy)\\
&\quad+C'_T\m(x,\sqrt{t})^{-1}\int_{X\setminus B(x,t)}\mathfrak{d}(x,y)^2 e^{-\frac{\mathfrak{d}(x,y)^2}{Ct}}\m(dy).
\end{align*}
Using (\ref{salo}) we find
\begin{align*}
&\m(x,\sqrt{t})^{-1}\int_{B(x,t)} \mathfrak{d}(x,y)^2e^{-\frac{\mathfrak{d}(x,y)^2}{Ct}}\m(dy)\leq t^2\frac{\m(x,t)}{\m(x,\sqrt{t})}\leq C'''_Tt^2t^{\nu_T/2} e^{C''''_T \frac{t}{T}},
\end{align*}
which has the desired form.\\
For the second integral we have
\begin{align*}
&\m(x,\sqrt{t})^{-1}\int_{X\setminus B(x,t)}\mathfrak{d}(x,y)^2 e^{-\frac{\mathfrak{d}(x,y)^2}{Ct}}\m(dy)\\
&\leq \m(x,\sqrt{t})^{-1}\sum^{\infty}_{k=1}\int_{\{y\in X:kt\leq \mathfrak{d}(x,y)\leq (k+1)t\}} \mathfrak{d}(x,y)^2e^{-\frac{\mathfrak{d}(x,y)^2}{Ct}}\m(dy)\\
&\leq t^2 \sum^{\infty}_{k=1}\frac{\m(x,(k+1)t)}{\m(x,\sqrt{t})} (k+1)^2 e^{-\frac{tk^2}{C}}\leq C'''_Tt^2 t^{\frac{\nu_T}{2}} \sum^{\infty}_{k=1} (k+1)^{2+\nu_T} e^{-\frac{tk^2}{C}+\frac{C'''_Tt(k+1)}{T}}\\ 
&\leq C'''_Te^{\frac{C'''_Tt}{T}}t^2 t^{\frac{\nu_T}{2}} \sum^{\infty}_{k=1}  e^{-\frac{tk^2}{C}+ \left(2+\nu_T+\frac{C'''_Tt}{T}\right)k}\leq C'''_Te^{\frac{C'''_Tt}{T}}t^2 t^{\frac{\nu_T}{2}} \int_{\IR}  e^{-\frac{tr^2}{C}+ \left(2+\nu_T+\frac{C'''_Tt}{T}\right)r} dr\\
&= C'''_Te^{\frac{C'''_Tt}{T}}t^2 t^{\frac{\nu_T}{2}-\frac{1}{2}}\sqrt{\pi C}e^{ \frac{1}{4t} C \left(2+\nu_T+\frac{C'''_Tt}{T}\right)^2  },
\end{align*}
which in view of $\nu_T\geq 1$ again has the desired form. Above we have used (\ref{salo}) and the trivial inquality $(1+a)^{\nu}\leq e^{\nu a}$, $a\in \IR$, $\nu>0$.\\
Having established the existence of a unique family of probability measures $\PP=(\PP^z)_{z\in X}$ on $C([0,\infty),X)$ satisfying (\ref{shgi}) for all $x_0\in X$, note that by a montone class argument one has
\begin{align}\label{dwaaa}
\int f(\omega(t)) \PP^x( d \omega)= \int_X p(t,x,y) f(y) \m(dy)\quad\text{for all bounded $f:X\to \IR$, $t>0$, $x\in X$.}
\end{align}
Then
$$
\lim_{t\to 0+} \int f(\omega(t)) \PP^x( d \omega)= f(x)\quad\text{ for all bounded continuous $f:X\to \IR$}
$$
follows from Proposition 3.2 in \cite{ags}, showing that that $\PP^x$ is concentrated on paths starting from $x$, and the Chapman-Kolmogorov equation immediately imply that $\PP$ has the Markov property. The strong Markov property then follows from the continuity of $x\mapsto \int f(\omega(t)) \PP^x( d \omega)$ for all $t\geq 0$ and all bounded continuous $f:X\to\IR$, which is a consequence of (\ref{dwaaa}) and Proposition 3.2 in \cite{ags}. 
\end{proof}

Note that by construction we have the absolute continuity property
$$
\m(A)=0 \>\>\Rightarrow \>\>\int^t_0\PP P^x\{\omega:\omega(s)\in A\} ds=0\quad\text{for all Borel subsets $A\subset X$, $t\geq 0$.}
$$

For obvious reasons, any coupling of $\PP$ is called a \emph{coupling of Brownian motions on $X$}.

\begin{Proposition}\label{am} For all $x,y\in X$ there exists a coupling 
\begin{align}\label{abs3}
(\mathbb{X},\mathbb{Y}): (\Omega, \IFF,P)\longrightarrow C([0,\infty), X\times X)
\end{align}
of Brownian motions on $(X,\mathfrak{d},\m)$ from $x$ to $y$ with 
	\begin{align}\label{gut}
P   \{ \tau(\mathbb{X},\mathbb{Y})>t\}\leq \frac{1}{2}F_K(t)\mathfrak{d}(x,y)\quad\text{ for all $t>0$},
\end{align}
where 
$$
F_K(t):=\begin{cases} &\frac{1}{\sqrt{2t}},\quad\text{if $K=0$}\\
&\sqrt{ \frac{K}{e^{2Kt}-1}  } ,\quad\text{ if $K\ne 0$,}\end{cases}
$$
in particular, every such coupling is successful if $K\geq 0$, and one has the following global H\"older estimate: for all $\alpha\in (0,1]$, all real-valued $f\in L^{\infty}(X,\m)$, $t>0$,
\begin{align}<
\left|e^{-tH}f(x)-e^{-tH}f(y)\right|\leq  2^{1-\alpha}F_K(t)^{\alpha} \mathfrak{d}(x,y)^{\alpha}\left\|f\right\|_{L^{\infty}}.
\end{align}
\end{Proposition}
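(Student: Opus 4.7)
The strategy is to feed the Ambrosio--Gigli--Savar\'{e} $L^{\infty}$-to-Lipschitz smoothing estimate into the abstract coupling machinery of Proposition \ref{kazu}. The plan has three steps.

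\emph{Step 1 (pointwise Lipschitz estimate).} The main input is the gradient estimate of Ambrosio/Gigli/Savar\'{e} on $\mathrm{RCD}(K,\infty)$ spaces (which applies here, since $\mathrm{RCD}^*(K,N)\Rightarrow \mathrm{RCD}(K,\infty)$): for every real-valued $f\in L^{\infty}(X,\m)$ and every $t>0$, the function $e^{-tH}f$ admits a Lipschitz representative $\tilde u_t$ satisfying
\begin{align*}
|\tilde u_t(x)-\tilde u_t(y)|\leq F_K(t)\,\mathfrak{d}(x,y)\,\|f\|_{L^{\infty}}\quad\text{for all }x,y\in X,
\end{align*}
with $F_K$ as in the statement (the constant arises from $\sqrt{2\int_0^t e^{2Ks}\,ds}$ in the denominator of the AGS bound). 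Since the heat kernel $p(t,\cdot,\cdot)$ is continuous, the function $x\mapsto \int p(t,x,y)f(y)\,\m(dy)$ is already pointwise continuous and agrees $\m$-a.e.\ with $\tilde u_t$, hence equals $\tilde u_t$ everywhere. Thus the displayed Lipschitz bound holds pointwise for the natural representative. By the integral identity (\ref{dwaaa}) this means
\begin{align*}
\left|\int f(\omega(t))\,\PP^x(d\omega)-\int f(\omega(t))\,\PP^y(d\omega)\right|\leq F_K(t)\,\mathfrak{d}(x,y)\,\|f\|_{L^{\infty}}
\end{align*}
for all bounded measurable $f$, all $x,y\in X$, and all $t>0$.

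\emph{Step 2 (construction of the coupling).} The previous inequality is exactly condition (ii) of Proposition \ref{kazu} b) with $F=F_K$, so the implication (ii)$\Rightarrow$(i) yields, for each pair $(x,y)$, a coupling $(\mathbb{X},\mathbb{Y})$ of $\PP$ from $x$ to $y$ satisfying (\ref{gut}). Concretely, one invokes the Sverchkov--Smirnov theorem to produce a maximal coupling (which exists since $\PP$ is a diffusion, hence cadlag and strongly Markov); maximality combined with Proposition \ref{kazu} a) and the Step 1 bound then yields the tail estimate. When $K\geq 0$ one has $F_K(t)\to 0$ as $t\to\infty$, and the tail estimate forces $P\{\tau(\mathbb{X},\mathbb{Y})=\infty\}=0$, so the coupling is successful.

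\emph{Step 3 (global H\"older smoothing).} With (i) in hand, the implication (i)$\Rightarrow$(iii) of Proposition \ref{kazu} b) immediately gives, for all $\alpha\in (0,1]$, real-valued $f\in L^{\infty}(X,\m)$, $t>0$,
\begin{align*}
|e^{-tH}f(x)-e^{-tH}f(y)|\leq 2^{1-\alpha}F_K(t)^{\alpha}\,\mathfrak{d}(x,y)^{\alpha}\,\|f\|_{L^{\infty}},
\end{align*}
which is the asserted global H\"older estimate.

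\emph{Expected obstacle.} The only delicate point is the transition from the AGS Lipschitz smoothing, which a priori yields an $\m$-a.e.\ defined Lipschitz representative, to a genuinely pointwise estimate valid for \emph{all} $x,y\in X$; this is what allows us to pair the bound with the pointwise-defined Wiener measures $\PP^x$ and hence apply Proposition \ref{kazu}. This is resolved by the continuity of the heat kernel established in the previous section, which guarantees that the integral formula $\int p(t,x,y)f(y)\,\m(dy)$ provides a continuous representative everywhere. Once this identification is made, the remainder of the argument is an essentially mechanical application of Proposition \ref{kazu}.
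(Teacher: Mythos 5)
Your proposal is correct and takes essentially the same approach as the paper: both feed the Ambrosio--Gigli--Savar\'{e} $L^{\infty}$-to-Lipschitz gradient estimate (Theorems 3.17 and 4.17 of \cite{ags}), converted to a pointwise statement via the heat-kernel representation (\ref{dwaaa}), into Proposition \ref{kazu}. The paper states this in one line; your write-up merely expands the same argument and makes the a.e.-to-everywhere identification explicit.
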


\begin{proof} In view of Proposition \ref{kazu} and (\ref{dwaaa}), the claim follows immediately from the following Lipschitz smoothing result (cf. Theorem 3.17 in combination with Theorem 4.17 in \cite{ags}): one has
$$
\left|e^{-tH}f(x)-e^{-tH}f(y)\right|\leq F_K(t) \mathfrak{d}(x,y)\left\|f\right\|_{L^{\infty}}.
$$
\end{proof}

The following class of potentials will be the perturbations of $H$ that will be considered in the sequel:

\begin{Definition}\label{kat} Given $\alpha\in [0,1]$, a Borel function $V:X\to \IC$ is said to be in the \emph{$\alpha$-Kato class $\mathcal{K}^{\alpha}(X)$ of $X$}, if  
$$
\lim_{t\to 0+}\sup_{x\in X}\int^t_0 s^{-\alpha/2}\int_Xp (s,x,y) |V(y)|\m(dy)  ds=0.
$$

\end{Definition}

Note that the Kato property only depends on the $\m$-equivalence class induced by $V$, and that 
\begin{align}\label{ui}
\int^t_0 s^{-\alpha/2}\int_Xp (s,x,y) |V(y)|\m(dy)  ds=\int^t_0 s^{-\alpha/2}\int |V(\omega(s))|\PP^x(d\omega)  ds.
\end{align}

Each $\mathcal{K}^{\alpha}(X)$ is a linear space and $\mathcal{K}(X):=\mathcal{K}^{0}(X)$ is the usual Kato class. One has
\begin{align}
&\mathcal{K}^{\alpha}(X)\subset \mathcal{K}^{\beta}(X),\quad\text{if $\alpha\geq \beta$},\\
&L^{\infty}(X)\subset  \mathcal{K}^{\alpha}(X).
\end{align}

The following lemma allows to test the Kato assumption in typical applications:

\begin{Lemma}\label{guum} a) Let $\alpha\in [0,1]$ and $q\in [1,\infty)$ with $q> N/(2-\alpha)$ one has 
$$
L^q_{1/\m}(X)+L^{\infty}(X)\subset \mathcal{K}^{\alpha}(X),
$$
where 
$$
L^q_{1/\m}(X):= \left\{W: \int_X\frac{|W(x)|^q}{\m(B(x,1))}\m(dx)<\infty\right\}.
$$
b) Assume $M,\widetilde{M}$ are geodesically complete smooth Riemannian manifolds with Ricci curvature $\geq K$ and let $\pi:\widetilde{M}\to M$ be a smooth surjective Riemannian submersion such that fibers $\pi^{-1}(y)\subset \widetilde{M}$ are minimal submanifolds for all $y\in M$. Then for all Borel functions $\phi:C([0,\infty),M)\to [0,\infty]$, $x\in \widetilde{M}$, using an obvious notation, one has 
$$
\int \phi(\pi(\omega)) \widetilde{\PP}^x(d\omega)\leq \int \phi(\omega) \PP^{\pi(x)}(d\omega).
$$
In particular, for all $\alpha\in [0,1]$ one has $\pi^*\mathcal{K}^{\alpha}(M)\subset \mathcal{K}^{\alpha}(\widetilde{M})$.
\end{Lemma}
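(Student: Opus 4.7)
The plan is to handle the two parts by quite different mechanisms: sharp heat-kernel upper bounds for (a), and the Laplacian intertwining produced by minimality of fibers for (b).

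For part (a), I would split $V = V_1 + V_2$ with $V_1 \in L^q_{1/\m}(X)$ and $V_2 \in L^\infty(X)$. The bounded piece is immediate from conservativeness:
$$
\int_0^t s^{-\alpha/2}\int_X p(s,x,y)|V_2(y)|\,\m(dy)\,ds \;\leq\; \frac{\|V_2\|_{L^\infty}}{1-\alpha/2}\,t^{1-\alpha/2},
$$
which vanishes uniformly in $x$ as $t\downarrow 0$. For the unbounded piece I would apply Jensen's inequality (with $p(s,x,\cdot)\,\m$ treated as a probability measure) to bound $\int p(s,x,y)|V_1(y)|\,\m(dy)$ by $\bigl(\int p(s,x,y)|V_1(y)|^q\,\m(dy)\bigr)^{1/q}$, and then use symmetry $p(s,x,y)=p(s,y,x)$ together with the Li--Yau estimate of Remark~\ref{asllp} and the local volume-doubling inequality (\ref{salo}) to replace $\m(B(y,\sqrt{s}))^{-1}$ by $C\,s^{-N/2}\,\m(B(y,1))^{-1}$ for $0<s<1$. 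Dropping the Gaussian factor $e^{-\mathfrak{d}(x,y)^2/(Cs)}\leq 1$, this yields
$$
\int_X p(s,x,y)|V_1(y)|^q\,\m(dy)\;\leq\;C\,s^{-N/2}\,\|V_1\|_{L^q_{1/\m}}^q,
$$
and hence $\int p(s,x,y)|V_1(y)|\,\m(dy)\leq C\,s^{-N/(2q)}\|V_1\|_{L^q_{1/\m}}$ uniformly in $x$. The assumption $q>N/(2-\alpha)$ is exactly the threshold that makes $\int_0^t s^{-\alpha/2-N/(2q)}\,ds$ converge and vanish as $t\downarrow 0$.

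For part (b), I would exploit the classical fact that a Riemannian submersion $\pi$ is a harmonic map if and only if its fibers are minimal, and that for such $\pi$ one has the pointwise intertwining $\Delta_{\widetilde M}(f\circ\pi) = (\Delta_M f)\circ\pi$ for every $f\in C^\infty_c(M)$. Combined with the Ricci lower bounds (which give stochastic completeness of both manifolds), this pushes down to the heat semigroups, $e^{-t\widetilde H}(f\circ\pi) = (e^{-tH}f)\circ\pi$, and via (\ref{dwaaa}) it identifies the one-dimensional marginal of $\pi(\tilde\omega)$ under $\widetilde{\PP}^x$ with that of $\omega$ under $\PP^{\pi(x)}$. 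Feeding this into the strong Markov property and comparing with (\ref{shgi}) yields $\pi_*\widetilde{\PP}^x = \PP^{\pi(x)}$ on cylinder events, which by a monotone class argument upgrades to equality of push-forward measures on $C([0,\infty),M)$; in particular the displayed bound holds with equality for every non-negative Borel $\phi$. For the Kato inclusion I would take $\phi(\omega):=s^{-\alpha/2}|V(\omega(s))|$, integrate in $s\in(0,t)$ via Fubini, take $\sup_{x\in\widetilde M}$ on both sides, and use surjectivity of $\pi$ to convert this supremum into $\sup_{y\in M}$; the hypothesis $V\in\mathcal{K}^\alpha(M)$ then forces $V\circ\pi\in\mathcal{K}^\alpha(\widetilde M)$.

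The main technical obstacle lies in the mild mismatch between the natural denominator $\m(B(x,\sqrt s))$ appearing in Li--Yau and the weight $\m(B(y,1))$ built into $L^q_{1/\m}$: symmetry of $p$ moves the ball's center from $x$ to $y$, and (\ref{salo}) lets me trade $\sqrt s$ for $1$ at a cost of exactly $s^{-N/2}$, which in turn produces the sharp integrability condition $q>N/(2-\alpha)$. In part (b) the minor technicality is upgrading the pointwise Laplacian intertwining to a path-space identity on noncompact manifolds, but this is enforced by stochastic completeness and the pointwise uniqueness of the diffusion supplied by Lemma~\ref{assa}, bypassing any quasi-regularity considerations.
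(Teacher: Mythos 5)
Your argument for part (a) is essentially the paper's proof: Hölder/Jensen against $p(s,x,\cdot)\,\m$, then symmetry of the heat kernel, Li--Yau and local volume doubling to trade $\m(B(y,\sqrt s))^{-1}$ for $Cs^{-N/2}\m(B(y,1))^{-1}$, giving the power $s^{-N/(2q)}$ whose integrability against $s^{-\alpha/2}$ is exactly the condition $q>N/(2-\alpha)$. The one thing you got more cleanly than the paper is the role of the weight: the displayed estimate in the paper has $\m(B(x,1))^{-1}$, which is either a typo or shorthand for the step via symmetry and Bishop--Gromov that you carried out explicitly (it must be $\m(B(y,1))^{-1}$ under the $y$-integral to match the definition of $L^q_{1/\m}$); the paper defers the precise argument to Example~IV.18 of \cite{Batu}.

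For part (b) the paper gives no proof at all --- it simply cites \cite{BatuHeat} for the displayed inequality and observes that the Kato inclusion is an immediate consequence. Your sketch is a genuine alternative and in fact aims at a stronger conclusion, namely equality $\pi_*\widetilde{\PP}^x=\PP^{\pi(x)}$. That equality does hold under the stated hypotheses: Ricci bounded below gives stochastic completeness of both manifolds, so both sides are probability measures, and an inequality between probability measures valid for all nonnegative Borel $\phi$ forces equality (the reference \cite{BatuHeat} works without completeness/stochastic-completeness assumptions, which is why it only produces an inequality). The only place your sketch is thin is the passage from the pointwise intertwining $\Delta_{\widetilde M}(f\circ\pi)=(\Delta_M f)\circ\pi$ to the semigroup/heat-kernel identity: since the fibers can have infinite volume, $f\circ\pi$ need not belong to $L^2(\widetilde M)$ even for $f\in C_c^\infty(M)$, so one cannot simply invoke the $L^2$ functional calculus. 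One has to either work with the minimal heat kernel and an exhaustion, or (more in the spirit of the paper) argue probabilistically that $\pi$ maps $\widetilde M$-Brownian motion to $M$-Brownian motion, and then use the one-dimensional marginal identity together with the (simple, not strong) Markov property and \eqref{shgi} to identify the cylinder-set probabilities. With that caveat filled in, the derivation of the Kato inclusion by taking $\phi(\omega)=s^{-\alpha/2}|V(\omega(s))|$, Fubini, and surjectivity of $\pi$ is exactly right.
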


\begin{proof} a) It suffices to show that $V\in \mathcal{K}^{\alpha}(X)$ for all $V\in L^q_{1/\m}(X)$. This can be seen as follows: with $q'$ the Hölder conjugate of $q$, for all $0<s<1$, $x\in M$ one has,
\begin{align*}
&\int_X p(s,x,y)|V(y)| \m(dy)=\int_X p(s,x,y)^{\f{1}{q'}} p(s,x,y)^{1-\f{1}{q'}}|V(y)| \m(dy)\\
&\leq \Big(\int_X p(s,x,y)\m(dy)\Big)^{\f{1}{q'}}\left(\int_X |V(y)|^{q}p(s,x,y)\m(dy)\right)^{\f{1}{q}}\\
&\leq C(K,N,q)\big(1+s^{-\frac{N}{2q}}\big) \left(\int_X |V(y)|^{q}\m(B(x,1))^{-1}\m(dy)\right)^{\f{1}{q}},
\end{align*}
where the second estimate follows from
$$
\int_X p(s,x,y)\mu(dy)\leq 1,
$$
in combination with the Li-Yau heat kernel bound and with Bishop-Gromov's volume estimate (cf. Example IV.18 in \cite{Batu} for a detailed argument). \\
b) The asserted estimate has been shown in \cite{BatuHeat}, and the asserted inclusion is a trivial consequence of the estimate.
\end{proof}

Note that $L^q_{1/\m}(X)\subset L^q(X)$ by the Bishop-Gromov inequality, while the other inclusion typically requires $\inf_{x\in X}\m(B(x,1))>0$. It is certainly an interesting problem to check how possible further properties of the Kato classes (for example, uniformly local $L^q$-conditions for the Kato classes, or Green's function characterizations of these spaces) depend on the geometry of $X$. Here, the methods developed by Kuwae and Takahashi in \cite{kt} should be applicable in principle (see also \cite{aizen,Batu}).\vspace{2mm}

The Kato condition is linked to operator theory via the following fact: for all real-valued $V\in \mathcal{K}(X)$, and all $\epsilon>0$ there exists $C_{\epsilon}<\infty$, which depends on $\epsilon$ (and $V$), such that \cite{Batu, peter}
$$
\int_XV(x)  |f(x)|^2\m(dx)\leq \epsilon\IEE(f)+C_{\epsilon}\left\|f\right\|^2_{L^2}\quad\text{for all $f\in \dom(\IEE)$.}
$$
In the language of perturbation theory this estimate means that the symmetric sesquilinear form in $L^2(X)$ induced by $V$ is infinitesimally $\IEE$-bounded, and so the KLMN-theorem \cite{teschl} implies that the symmetric sesquilinear form 
$$
\dom(\IEE)\times \dom(\IEE)\ni (f_1,f_2)\longmapsto \IEE_V(f_1,f_2):=\IEE(f_1,f_2)+ \int_XV(x)  f_1(x)f_2(x)\m(dx)\in \IC
$$
is semibounded from below and closed (it is densely defined, as $\IEE$ is so). Thus $\IEE_V$ canonically induces a self-adjoint semibounded from below operator $H_V$ in $L^2(X)$. \vspace{1mm}

Let 
$$
(e^{-tH_V})_{t\geq 0}\subset \ILL(L^2(X))
$$
denote the \emph{Schr\"odinger semigroup} induced by $V$. For all $V\in \mathcal{K}(X)$ one has the Feynman-Kac formula, which states that for all $\Psi\in L^2(X)$, $t> 0$ and $\m$-a.e. $x\in X$ one has
$$
e^{-tH_V}\Psi(x)= \int e^{-\int^t_0V(\omega(s))ds}\Psi(\omega(t))\PP^x(d\omega).
$$
The latter formula can be proved precisely as in the Riemannian case \cite{Batu}. If $\Psi\in L^{q}(X)$, $q\in [1,\infty]$, then we take the RHS of the Feynman-Kac formula as a pointwise well-defined representative of $e^{-tH_V}\Psi(x)$, which as a function of $x$ defines an element on $L^{q}(X)$.


Here comes our main result:

\begin{Theorem}\label{haupt1} Let $\alpha\in (0,1]$, $V\in \mathcal{K}^{\alpha}(X)$, $t>0$. \\
a) For all $x,y\in X$, $\Phi\in L^{\infty}(X)$ one has
$$
\left|e^{-tH_{V}}\Phi(x)-e^{-tH_{V}}\Phi(y)\right|\leq \Big(2^{1-\alpha}F_K(t)^{\alpha}+ A(V,K,\alpha,t)\Big) \left\|\Phi\right\|_{L^{\infty}}\mathfrak{d}(x,y)^{\alpha},
$$
where 
$$
A(V,K,\alpha,t):=2^{2-\alpha}\sup_{x\in X}\int e^{-\int^{t}_0V(\omega(s))ds}\PP^x(d\omega) \cdot \int^{t/2}_0F_K(s)^{\alpha}\int_X p(s,x,y)|V(y)|\m(dy) ds<\infty.
$$
b) For all $x,y\in X$, $\lambda\in\IR$ and all $\Psi\in \dom(H_V)\cap L^{\infty}(X)$ with $H_V\Psi=\lambda\Psi$ one has 
$$
\left|\Psi(x)-\Psi(y)\right|\leq e^{t\lambda}\Big(2^{1-\alpha}F_K(t)^{\alpha}+ A(V,K,\alpha,t\Big) \left\|\Psi\right\|_{L^{\infty}}\mathfrak{d}(x,y)^{\alpha}.
$$
\end{Theorem}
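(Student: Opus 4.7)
The plan is to derive~(a) by a perturbation (Duhamel) argument that reduces the problem to the free $\alpha$-H\"older smoothing of $e^{-tH}$ supplied by Proposition~\ref{am}, and to read off~(b) from~(a) via the eigenfunction identity $\Psi=e^{t\lambda}e^{-tH_V}\Psi$ provided by the functional calculus. Writing $u_r:=e^{-rH_V}\Phi$ and $M(r):=\sup_{x\in X}\int e^{-\int_0^r V(\omega(s))ds}\,\PP^x(d\omega)$, I would first establish the Volterra identity
\begin{equation*}
e^{-tH_V}\Phi(x) \;=\; e^{-tH}\Phi(x) \;-\; \int_0^{t} e^{-sH}\bigl(V\cdot u_{t-s}\bigr)(x)\,ds,
\end{equation*}
obtained by differentiating $s\mapsto e^{-(t-s)H}e^{-sH_V}\Phi$; the Feynman--Kac bound $\|u_r\|_{L^\infty}\leq M(r)\|\Phi\|_{L^\infty}$ together with the $\alpha$-Kato hypothesis justifies the required integrability and Fubini. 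Subtracting the identity at $x$ and $y$, the non-integral piece is controlled directly by Proposition~\ref{am}, producing the announced summand $2^{1-\alpha}F_K(t)^\alpha\|\Phi\|_{L^\infty}\mathfrak{d}(x,y)^\alpha$.

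The decisive step is controlling the Duhamel integral. Since $V\cdot u_{t-s}$ is typically unbounded (e.g.\ for Coulombic potentials), Proposition~\ref{am} cannot be applied to it directly; instead I would factor $e^{-sH}=e^{-(s/2)H}\circ e^{-(s/2)H}$ and use the inner factor as a \emph{smoother}. The heat-kernel representation combined with the uniform $L^\infty$-bound $\|u_{t-s}\|_{L^\infty}\leq M(t)\|\Phi\|_{L^\infty}$ on $s\in[0,t]$ gives
\begin{equation*}
\bigl\|e^{-(s/2)H}(V\cdot u_{t-s})\bigr\|_{L^\infty} \;\leq\; M(t)\,\|\Phi\|_{L^\infty}\sup_{z\in X}\int_X p(s/2,z,w)\,|V(w)|\,\m(dw),
\end{equation*}
so that the outer $e^{-(s/2)H}$ now acts on a bounded function and Proposition~\ref{am} yields
\begin{equation*}
\bigl|e^{-sH}(V u_{t-s})(x)-e^{-sH}(V u_{t-s})(y)\bigr| \;\leq\; 2^{1-\alpha}F_K(s/2)^\alpha\bigl\|e^{-(s/2)H}(V u_{t-s})\bigr\|_{L^\infty}\mathfrak{d}(x,y)^\alpha.
\end{equation*}
Integrating in $s\in(0,t]$ and substituting $u=s/2$ transforms $\int_0^t F_K(s/2)^\alpha(\cdots)\,ds$ into $2\int_0^{t/2}F_K(u)^\alpha(\cdots)\,du$, producing exactly the Kato-type integral of $A(V,K,\alpha,t)$ with combined constant $2\cdot 2^{1-\alpha}=2^{2-\alpha}$ and prefactor $M(t)$. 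The finiteness $A(V,K,\alpha,t)<\infty$ is immediate from $V\in\mathcal{K}^\alpha(X)$ in view of the asymptotic $F_K(u)^\alpha\sim u^{-\alpha/2}$ as $u\to 0^+$; part~(b) then follows by applying~(a) to $\Phi:=\Psi$ and multiplying by $e^{t\lambda}$.

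The hard part is the smoothing trick of the second paragraph: without splitting $e^{-sH}$ and using the Kato assumption to dominate $e^{-(s/2)H}|V|$ in $L^\infty$, Proposition~\ref{am} is useless on the singular Duhamel integrand $V u_{t-s}$, and no global H\"older estimate can be extracted. A subsidiary technicality is the uniform-in-$s$ replacement $\|u_{t-s}\|_{L^\infty}\leq M(t)\|\Phi\|_{L^\infty}$ for $s\in[0,t]$; this is automatic when $V$ has a definite sign, and in the general signed Kato case it should follow from a monotone-envelope argument for $s\mapsto M(s)$ on $[0,t]$ that the author implicitly absorbs into the $M(t)$-factor in $A$.
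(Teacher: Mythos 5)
Your argument follows essentially the same route as the paper: Duhamel for $e^{-tH_V}$, the split $e^{-sH}=e^{-(s/2)H}e^{-(s/2)H}$ so that the inner factor absorbs the singularity of $V$ via the $\alpha$-Kato hypothesis, the outer factor supplies the $L^\infty\to C^{0,\alpha}$ estimate from Proposition~\ref{am}, Feynman--Kac/Khashminskii for the $L^\infty$-bound on $e^{-(t-s)H_V}\Phi$, and the substitution $u=s/2$ producing $2\cdot 2^{1-\alpha}=2^{2-\alpha}$. The spirit is the same; part (b) is obtained identically from $\Psi=e^{t\lambda}e^{-tH_V}\Psi$.

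Two steps that the paper carries out carefully are, however, glossed over in your sketch and constitute real gaps. First, you assert the Volterra identity for $e^{-tH_V}$ directly by differentiating $s\mapsto e^{-(t-s)H}e^{-sH_V}\Phi$; for a singular Kato-class $V$, the domain and integrability issues in that operator-calculus argument (and in turn the pointwise, everywhere-in-$x$, meaning of the identity you use) are nontrivial. The paper instead proves Duhamel at the Feynman--Kac level for the bounded truncations $V_{n,m}:=\min(m,\max(V,n))$, runs the entire H\"older estimate for $H_{V_{n,m}}$, and then passes to the limit $m\to\infty$, $n\to\infty$ by monotone and dominated convergence; this truncation-and-limit scheme is what actually makes the argument rigorous and should not be omitted. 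Second, you claim that $A(V,K,\alpha,t)<\infty$ is ``immediate'' from $V\in\mathcal{K}^\alpha(X)$, but the Kato hypothesis only gives smallness of $\sup_x\int_0^r s^{-\alpha/2}\int_X p(s,x,y)|V(y)|\,\m(dy)\,ds$ for $r$ near $0$; extending this to finiteness at an arbitrary time $t/2$ requires the Chapman--Kolmogorov bootstrap (split $[0,lt']$ into blocks of length $t'$ and use $\int_X p(\cdot,x,z)\,\m(dz)\le 1$) carried out in the paper. Your hedge about $\|u_{t-s}\|_{L^\infty}\leq M(t)\|\Phi\|_{L^\infty}$ is a fair observation of a subtlety the paper itself passes over lightly, but the two gaps above are the ones that would need to be filled for your sketch to become a complete proof.
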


\begin{proof} a) Given $r>0$ set
\begin{align*}
&C_{\exp}(V,r):=\sup_{x\in X}\int e^{-\int^r_0V(\omega(s))ds}\PP^x(d\omega),\\
&C(V,K,\alpha,r):=\sup_{x\in X}\int^r_0F_K(s)^{\alpha}\int_X p(s,x,y)|V(y)|\m(dy) ds,
\end{align*}
so that
$$
A(V,K,\alpha,r)=2^{2-\alpha}C(V,K,\alpha,r/2)C_{\exp}(V,r).
$$
We have 
\begin{align}\label{dede}
C_{\exp}(V,r)\leq 2 e^{C_Vr}<\infty
\end{align}
 by the so called Khashminskii's lemma, which can be proved precisely as in the Riemannian case \cite{aizen,Batu}. In order to show $C(V,K,\alpha,r)<\infty$, as $F_K$ is bounded on each compact subset of $[0,\infty)$ and as $F_K^{\alpha}$ behaves like $s^{-\alpha/2}$ near $s=0$, it suffices to show
\begin{align*}
\sup_{x\in X}\int^t_0s^{-\alpha/2}\int_X p(s,x,y)|V(y)|\m(dy) ds<\infty.
\end{align*}
We can follow the argument from \cite{kw0}: pick a $t'>0$ with 
\begin{align*}
\sup_{x\in X}\int^{t'}_0s^{-\alpha/2}\int_X p(s,x,y)|V(y)|\m(dy) ds<\infty
\end{align*}
and an $l\in\IN$ with $t<lt'$. Then we can estimate 
\begin{align*}
&\sup_{x\in X}\int^{t}_0s^{-\alpha/2}\int_X p(s,x,y)|V(y)|\m(dy) \ ds\nn\\
&\leq \sup_{x\in X} \int_X\int^{l t'}_0 s^{-\alpha/2}p(s,x,y)|V(y)|  d s \ \m(dy) \nn\\
&=\sup_{x\in X} \int_X\sum^l_{k=1}\int^{ t'}_0 ((k-1)t'+s)^{-\alpha/2}p((k-1)t'+s,x,y)|V(y)|  d s \ \m(dy)\nn\\
&\leq\sum^l_{k=1}\sup_{x\in X} \int_X\int^{ t'}_0 s^{-\alpha/2}p((k-1)t'+s,x,y)|V(y)|  d s \ \m(dy)\nn\\
&=\sum^l_{k=1}\sup_{x\in X} \int_X p((k-1)t',x,z) \int^{ t'}_0s^{-\alpha/2} \int_X p(s,z,y)|V(y)|  \m(dy) \ d s  \  \m(dz)\nn\\
&\leq \sup_{z\in X}\int^{ t'}_0s^{-\alpha/2}\int_X p(s,z,y)|V(y)| \m(dy) \ d s  \ \left(\sum^l_{k=1} \sup_{x\in X}\int_X p((k-1)t',x,z)\m(dz)\right)\\ 
&\leq l \sup_{z\in X}\int^{ t'}_0s^{-\alpha/2}\int_X p(s,z,y)|V(y)| \m(dy) \ d s <\infty,
\end{align*}
where we have used the Chapman-Kolomogorov identity and $\int_X p((k-1)t',x,z) \m(dz)\leq 1$. In order to prove the asserted H\"older estimate, we denote the seminorm on the space $C^{0,\alpha}(X)$ of $\alpha$-H\"older continuous functions $f:X\to\IC$ by
$$
\left\|f\right\|_{0,\alpha}= \sup_{x\ne y} |f(x)-f(y)|\mathfrak{d}(x,y)^{-\alpha}.
$$
Let $\Phi\in L^{\infty}(X)$ and define 
$$
V_{n}:=\max(V,n),\quad V_{n,m}:=\min(m,V_n)
$$
for $m,n\in\IN$. Duhamel's formula states that
\begin{align*}
e^{-tH_{V_{n,m}}}\Phi=e^{-tH}\Phi+\int^{t}_0 e^{-\frac{s}{2}H}e^{-\frac{s}{2}H}V_{n,m}e^{-(t-s)H_{V_{n,m}}}\Phi ds,
\end{align*}
a fact which can proved using the Markov property of $\PP$. Thus we have
\begin{align}\label{v1}
&\left\|e^{-tH_{V_{n,m}}}\Phi\right\|_{C^{0,\alpha}}\leq \left\|e^{-tH}\Phi\right\|_{C^{0,\alpha}}\\
&\quad\quad\quad\quad+\int^{t}_0 \left\|e^{-\frac{s}{2}H}\right\|_{L^{\infty}\to C^{0,\alpha}}\left\|e^{-\frac{s}{2}H}V_{n,m}\right\|_{L^{\infty}\to L^{\infty}}\left\|e^{-(t-s)H_{V_{n,m}}}\Phi \right\|_{L^{\infty}}ds.
\end{align}
Proposition \ref{am} implies  
\begin{align}\label{v2}
\left\|e^{-tH}\Phi\right\|_{C^{0,\alpha}}\leq 2^{1-\alpha}F_K(t)^{\alpha} \left\|\Phi\right\|_{L^{\infty}},
\end{align}
and
\begin{align}\label{v3}
\left\|e^{-\frac{s}{2}H}\right\|_{L^{\infty}\to C^{0,\alpha}}\leq 2^{1-\alpha}F_K(s/2)^{\alpha}. 
\end{align}
By the Feynman-Kac formula we have
\begin{align}\label{v4}
\left\|e^{-(t-s)H_{V_{n,m}}}\Phi \right\|_{L^{\infty}}\leq C_{\exp}(V,t)\left\|\Phi \right\|_{L^{\infty}}.
\end{align}
Finally, given $f\in L^{\infty}(X)$ one has
\begin{align*}
&\int^t_0\left\|e^{-\frac{s}{2}H}\right\|_{L^{\infty}\to C^{0,\alpha}}\left\|e^{-\frac{s}{2}H}V_{n,m}f\right\|_{L^{\infty}}ds\\
&\leq  2^{1-\alpha} \int^t_0     F_K(s/2)^{\alpha}\sup_x\int_Xp(s/2,x,y) |V_{n,m}(y)|\m(dy)   ds  \left\|f\right\|_{L^{\infty}}\\
&\leq 2^{2-\alpha} \sup_x \int^{t/2}_0   F_K(s)^{\alpha}\int_Xp(s,x,y) |V_{n,m}(y)|\m(dy)   ds  \left\|f\right\|_{L^{\infty}}\\
&\leq 2^{2-\alpha} C(V,K,\alpha,t/2)  \left\|f\right\|_{L^{\infty}},
\end{align*}
where the second inequality can be seen as follows: pick a sequence $x_l$ in $X$ such that 
\begin{align*}
&F_K(s/2)^{\alpha}\int_Xp(s/2,x_l,y) |V_{n,m}(y)|\m(dy)\leq F_K(s/2)^{\alpha}\int_Xp(s/2,x_{l+1},y) |V_{n,m}(y)|\m(dy),\\
&\sup_x F_K(s/2)^{\alpha}\int_Xp(s/2,x,y) |V_{n,m}(y)|\m(dy)=\lim_l  F_K(s/2)^{\alpha}\int_Xp(s/2,x_l,y) |V_{n,m}(y)|\m(dy),
\end{align*}
and so
\begin{align*}
&\int^t_0 F_K(s/2)^{\alpha}\sup_x\int_Xp(s/2,x,y) |V_{n,m}(y)|\m(dy) ds \\
&= \int^t_0   \lim_l  F_K(s/2)^{\alpha}\int_Xp(s/2,x,y) |V_{n,m}(y)|\m(dy)   ds\\
&= \lim_l\int^t_0     F_K(s/2)^{\alpha}\int_Xp(s/2,x,y) |V_{n,m}(y)|\m(dy)   ds\\
&\leq \sup_x\int^t_0     F_K(s/2)^{\alpha}\int_Xp(s/2,x,y) |V_{n,m}(y)|\m(dy)   ds\\
&=2\sup_x\int^{t/2}_0     F_K(s)^{\alpha}\int_Xp(s,x,y) |V_{n,m}(y)|\m(dy)   ds,
\end{align*}
by monotone convergence. Thus we have
\begin{align}\label{v5}
\int^{t}_0 \left\|e^{-\frac{s}{2}H}\right\|_{L^{\infty}\to C^{0,\alpha}}\left\|e^{-\frac{s}{2}H}V_{n,m}\right\|_{L^{\infty}\to L^{\infty}}ds\leq 2^{2-\alpha} C(V,K,\alpha,t/2).  
\end{align}
Putting together (\ref{v1})-(\ref{v5}) we arrive at the following inequality: for all $x,y\in X$, $\Phi\in L^{\infty}(X)$ one has
\begin{align*}
&\left|e^{-tH_{V_{n,m}}}\Phi(x)-e^{-tH_{V_{n,m}}}\Phi(y)\right|\\
&\leq \Big(2^{1-\alpha}F_K(t)^{\alpha}+2^{2-\alpha} C(V,K,\alpha,t/2)C_{\exp}(V,t)\Big) \left\|\Phi\right\|_{L^{\infty}}\mathfrak{d}(x,y)^{\alpha}.
\end{align*}
It remains to prove 
$$
\lim_n\lim_m\left|e^{-tH_{V_{n,m}}}\Phi(x)-e^{-tH_{V_{n,m}}}\Phi(y)\right|= \left|e^{-tH_{V}}\Phi(x)-e^{-tH_{V}}\Phi(y)\right|.
$$
Here, by linearity, we can assume $\Phi\geq 0$. Then
$$
\lim_m\left|e^{-tH_{V_{n,m}}}\Phi(x)-e^{-tH_{V_{n,m}}}\Phi(y)\right|= \left|e^{-tH_{V_n}}\Phi(x)-e^{-tH_{V_n}}\Phi(y)\right|
$$
follows from the Feynman-Kac formula and monotone convergence, and 
$$
\lim_n\left|e^{-tH_{V_{n}}}\Phi(x)-e^{-tH_{V_{n}}}\Phi(y)\right|= \left|e^{-tH_{V}}\Phi(x)-e^{-tH_{V}}\Phi(y)\right|
$$
follows from the Feynman-Kac formula and dominated convergence (using Khashminskii's lemma), completing the proof of part a).\\ 
b) This follows from part a) using $e^{-tH_V}\Psi=e^{-t\lambda}\Psi$ by the spectral calculus.
\end{proof}  

Combining this result with Lemma \ref{guum} we immediately get:

\begin{Corollary}\label{sub} Let $M, \widetilde{M}$ be smooth geodesically complete Riemannian manifolds with Ricci curvature $\geq K$ and let $\alpha\in (0,1]$. Let $\pi_j, \pi_{ij}:\widetilde{M}\to M$ be a finite collection of smooth surjective Riemannian submersions such that the fibers $\pi^{-1}_j(y), \pi^{-1}_{ij}(y)\subset \widetilde{M}$ are minimal submanifolds for all $y\in M$, and let 
\begin{align*}
&\text{$V_j\in L^{q_j}_{1/\m}(M)+L^{\infty}(M)$ for some $q_j> \dim(M)/(2-\alpha)$,}\\
&\text{$V_{ij}\in L^{q_{ij}}_{1/\m}(M)+L^{\infty}(M)$ for some $q_{ij}> \dim(M)/(2-\alpha)$.}
\end{align*}
Then one has
$$
V:=\sum_{j} V_j\circ \pi_j+\sum_{i,j} V_{ij}\circ \pi_{ij}\in \mathcal{K}^{\alpha}(\widetilde{M}),
$$
and for all $x,y\in \widetilde{M}$, $\Phi\in L^{\infty}(\widetilde{M})$ it holds that
$$
\left|e^{-tH_{V}}\Phi(x)-e^{-tH_{V}}\Phi(y)\right|\leq \Big(2^{1-\alpha}F_K(t)^{\alpha}+ B(V,K,\alpha,t)\Big) \left\|\Phi\right\|_{L^{\infty}}\mathfrak{d}(x,y)^{\alpha},
$$
where
\begin{align*}
&B(V,K,\alpha,t):=2^{2-\alpha}\sup_{x\in M}\int e^{-\sum_j\int^t_0V_j(\omega(s))ds-\sum_{ij}\int^t_0V_{ij}(\omega(s))ds}\PP^x (d\omega)\\
&\quad\quad\quad\quad\quad\quad\quad\times \left( \sum_j\int^{t/2}_0F_K(s)^{\alpha}\int_M p(s,x,y)|V_j(y)|\m(dy) ds\right.\\
&\quad\quad\quad\quad\quad\quad\quad\quad\quad+\left.\sum_{i,j}\int^{t/2}_0F_K(s)^{\alpha}\int_M p(s,x,y)|V_{ij}(y)|\m(dy) ds\right)<\infty.
\end{align*}
\end{Corollary}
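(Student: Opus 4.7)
The result is immediate from Theorem \ref{haupt1}(a): I just need to verify Kato membership and rewrite the universal constant $A(V,K,\alpha,t)$ as $B(V,K,\alpha,t)$ using base-manifold data.

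First, Lemma \ref{guum}(a) together with the assumptions $q_j,q_{ij}>\dim(M)/(2-\alpha)$ puts each $V_j$ and $V_{ij}$ in $\mathcal{K}^{\alpha}(M)$; Lemma \ref{guum}(b), whose minimal-fiber hypothesis is satisfied by every $\pi_j,\pi_{ij}$, then pulls these memberships back to $\mathcal{K}^{\alpha}(\widetilde M)$, and linearity of the Kato class yields $V\in\mathcal{K}^{\alpha}(\widetilde M)$. Theorem \ref{haupt1}(a) therefore applies on $\widetilde M$ and delivers the asserted H\"older estimate with constant $2^{1-\alpha}F_K(t)^{\alpha}+A(V,K,\alpha,t)$.

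To replace $A$ by $B$, I push each factor of $A$ down to the base. For the heat-integral factor I would rewrite
\begin{align*}
\int_{\widetilde M}\tilde p(s,\tilde x,\tilde y)|V(\tilde y)|\tilde\m(d\tilde y)=\int|V(\tilde\omega(s))|\widetilde{\PP}^{\tilde x}(d\tilde\omega),
\end{align*}
use the triangle inequality $|V|\leq\sum_j|V_j|\circ\pi_j+\sum_{i,j}|V_{ij}|\circ\pi_{ij}$, apply Lemma \ref{guum}(b) summand by summand, and then pass from $\sup_{\tilde x\in\widetilde M}$ to $\sup_{x\in M}$ via surjectivity of each $\pi_j,\pi_{ij}$; this reproduces the sum of base heat integrals appearing in $B$. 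The Feynman--Kac factor is handled by Lemma \ref{guum}(b) applied to the nonnegative path functional
\begin{align*}
\phi(\omega)=\exp\!\Big(-\sum_j\int_0^tV_j(\omega(s))\,ds-\sum_{i,j}\int_0^tV_{ij}(\omega(s))\,ds\Big),
\end{align*}
which produces exactly the exponential in $B$. Finiteness of $B$ follows from Lemma \ref{guum}(a) and Khashminskii's lemma applied on $M$, exactly as in the proof of Theorem \ref{haupt1}.

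The only delicate point is this last application of Lemma \ref{guum}(b): the matching is clean when the submersions admit a common factorization through a single $\pi$ (which covers the molecular model that motivates the corollary), so that $V\circ\tilde\omega$ reads along a single projected base trajectory; when they do not, the distinct projections $\pi_j(\tilde\omega),\pi_{ij}(\tilde\omega)$ must be decoupled by a H\"older-type inequality at the cost of a harmless fixed constant. Beyond this bookkeeping, the proof is a routine combination of Theorem \ref{haupt1} with the two parts of Lemma \ref{guum}.
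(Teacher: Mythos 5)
Your outline follows the paper's intent exactly (the paper states the corollary as an immediate consequence of Theorem~\ref{haupt1} plus Lemma~\ref{guum}, with no written proof), and your treatment of the Kato membership and of the heat-integral factor of $A$ is correct: the triangle inequality reduces it to single summands $|V_j|\circ\pi_j$ and $|V_{ij}|\circ\pi_{ij}$, and each of these is pushed down to $M$ by Lemma~\ref{guum}(b) applied with one submersion at a time, after which surjectivity passes $\sup_{\tilde x\in\widetilde M}$ to $\sup_{x\in M}$.

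The problem is your treatment of the Feynman--Kac factor, and you have correctly sensed the difficulty but your two escape routes both fail. First, your claim that the molecular model factors through a common $\pi$ is false: the coordinate projections $\pi_j(\mathbf{x}_1,\dots,\mathbf{x}_m)=\mathbf{x}_j$ and the difference maps $\pi_{ij}(\mathbf{x}_1,\dots,\mathbf{x}_m)=\mathbf{x}_i-\mathbf{x}_j$ have pairwise distinct kernels and together span $\IR^{3m}$ for $m\geq 2$, so no single surjective submersion $\IR^{3m}\to\IR^{3}$ can sit underneath all of them. Thus the motivating example is exactly the case you call ``not clean.'' Second, the proposed H\"older-type decoupling is not a ``harmless fixed constant'': applying H\"older's inequality with $n$ factors to $\mathbb{E}\big[\prod_k e^{-\int V_k(\pi_k(\tilde\omega(s)))\,ds}\big]$ produces $\prod_k\mathbb{E}\big[e^{-n\int V_k(\pi_k(\tilde\omega(s)))\,ds}\big]^{1/n}$, i.e.\ it rescales every potential by $n$ inside the exponential, which does not reproduce the expression $\sup_{x\in M}\int e^{-\sum_j\int V_j(\omega(s))ds-\sum_{ij}\int V_{ij}(\omega(s))ds}\PP^x(d\omega)$ appearing in $B$. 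Lemma~\ref{guum}(b) simply cannot be applied to a functional that reads several \emph{different} projections of the same $\widetilde M$-path, because it is stated for a single $\pi$; and for distinct submersions, the various $\pi_k(\tilde\omega)$ are distinct, mutually dependent $M$-Brownian paths, so there is no exact identification with a single base trajectory.

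What one should do instead is not try to push the exponential factor down to $M$ at all. Keep the Feynman--Kac factor as the $\widetilde M$-quantity $C_{\exp}(V,t)=\sup_{\tilde x\in\widetilde M}\int e^{-\int_0^t V(\tilde\omega(s))\,ds}\,\widetilde{\PP}^{\tilde x}(d\tilde\omega)$, exactly as it appears in $A(V,K,\alpha,t)$ of Theorem~\ref{haupt1}. Its finiteness is Khashminskii's lemma on $\widetilde M$, and the Kato constant of $V$ on $\widetilde M$ entering Khashminskii is controlled by base data via Lemma~\ref{guum}(b) applied summand by summand (one projection at a time), which is legitimate. This is also all that the paper actually uses in the molecular application, where the exponential factor is simply absorbed into a $2e^{C_{R,Z}t}$ via \eqref{dede}. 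In short: the corollary as stated, with the Feynman--Kac factor written as a \emph{single}-path integral over $M$, does not follow literally from Lemma~\ref{guum}(b) when the submersions are distinct; the correct and fully justified constant keeps that factor on $\widetilde M$, and your proof should say so rather than attempt the unsound decoupling.
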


Note that in the above situation $H_V$ is the unique self-adjoint realization of $-\Delta+V$ in $L^2(\widetilde{M})$ (cf. \cite{Batu} for the asserted essential self-adjointness).

\section{Application to molecular Schr\"odinger operators}

Assume $M=\IR^{3}$, $\widetilde{M}=\IR^{3m}$. Pick $l\in\IN$, $R\in \IR^{3l}$, $Z\in [0,\infty)^l$ and consider the potential 
$$
V_{R,Z}:\IR^{3m}\longrightarrow \IR,\quad V_{R,Z}(\mathbf{x}_1, \dots,\mathbf{x}_m):=-\sum_{j=1}^m \sum_{i=1}^l \frac{Z_i}{|\mathbf{x}_j-\mathbf{R}_i|}+ \sum_{1\leq i<j\leq m} \frac{1}{|\mathbf{x}_i-\mathbf{x}_j|}
$$
of a molecule having $l$ nuclei and $m$ electrons, where the $j$-th nucleus carrying $\sim Z_j$ protons is considered to be fixed in $\mathbf{R}_j$ (infinite mass limit). The elementary charge has been set equal to $1$. Then with
\begin{align*}
&\pi_j:\IR^{3m}\longrightarrow \IR^3, \quad (\mathbf{x}_1, \dots,\mathbf{x}_m)\longmapsto \mathbf{x}_j, \\
&\pi_{ij}:\IR^{3m}\longrightarrow \IR^3, \quad (\mathbf{x}_1, \dots,\mathbf{x}_m)\longmapsto \mathbf{x}_i-\mathbf{x}_j, \\
&V_j(\mathbf{x}):=\sum_{i=1}^l Z_i/|\mathbf{x}-\mathbf{R}_i|,\quad \mathbf{x}\in\IR^3 \\
& V_{ij}(\mathbf{x}):=1/|\mathbf{x}|,,\quad \mathbf{x}\in\IR^3,
\end{align*}
one finds that the assumptions of the Corollary \ref{sub} are satisfied for all $\alpha\in (0,1)$, since
$$
V_j,V_{ij}\in L^{q}(\IR^{3})+L^{\infty}(\IR^{3})\quad\text{ for all $q\in [1,3)$.}
$$
In fact, one has $V_{R,Z}\in \mathcal{K}^{\alpha}(\IR^{3m})$, if and only if $\alpha<1$. In the sequel, $C_{a,b,\dots}>0$ denotes a constant which only depends on the parameters $a,b,\dots$, and which may change from line to line (while $C$ will denote any universal constant). Since\footnote{This mapping property relies on the fact that the Euclidean space is ultracontractive and on the Kato property of the potential.} \cite{Simon} 
\begin{align*}
e^{-\frac{t}{2}H_{V_{R, Z}}}: L^r(\IR^{3m})\longrightarrow L^{\infty}(\IR^{3m}),
\end{align*}
with
$$
\left\|e^{-\frac{t}{2}H_{V_{R, Z}}}\right\|_{ L^r(\IR^{3m})\to L^{\infty}(\IR^{3m})}\leq C t^{-\frac{3m}{2r}}e^{C_{R,Z}t},
$$
we obtain from 
$$
e^{-tH_{V_{R, Z}}}=e^{-\frac{t}{2}H_{V_{R, Z}}}e^{-\frac{t}{2}H_{V_{R, Z}}}
$$
and Corollary \ref{sub} the following smoothing property,
\begin{align*}
e^{-tH_{V_{R, Z}}}: L^r(\IR^{3m})\longrightarrow C^{0,\alpha}(\IR^{3m})\quad\text{ for all $t>0$, $r\in [1,\infty]$, $\alpha\in (0,1)$,}
\end{align*}
with
\begin{align*}
&\left\|e^{-tH_{V_{R, Z}}}\right\|_{ L^r(\IR^{3m})\to C^{0,\alpha}(\IR^{3m})}\leq \left\|e^{-\frac{t}{2}H_{V_{R, Z}}}\right\|_{ L^{\infty}(\IR^{3m})\to C^{0,\alpha}(\IR^{3m})}\left\|e^{-\frac{t}{2}H_{V_{R, Z}}}\right\|_{ L^r(\IR^{3m})\to  L^{\infty}(\IR^{3m})}\\
&\leq  C\Big(2^{1-\alpha}t^{-\alpha/2}+ B(V_{R,Z},K,\alpha,t/2)|_{K=0}\Big) t^{-\frac{3m}{2r}}e^{C_{R,Z}t}.
\end{align*}

Using 
$$
4\pi\int^{\infty}_{0} p(r,\mathbf{y},\mathbf{R})  dr = |\mathbf{y}-\mathbf{R}|^{-1},
$$
Fubini, Chapman-Kolmogorov and
$$
p(u,\mathbf{a},\mathbf{b})\leq C u^{-\frac{3}{2}},
$$
we get
\begin{align*}
& \int_{\IR^{3}} p(s,\mathbf{x},\mathbf{y})|\mathbf{y}-\mathbf{R}|^{-1} d\mathbf{y} =4\pi \int^{\infty}_{0} \int_{\IR^{3}}  p(s,\mathbf{x},\mathbf{y})  p(r,\mathbf{y},\mathbf{R})   d\mathbf{y} dr \\
&= 4\pi \int^{\infty}_{0} p(s+r,\mathbf{x},\mathbf{R})  dr\\ 
&\leq C \int^{\infty}_{0} (s+r)^{-3/2} dr\leq C s^{-1/2}, 
\end{align*}
and so using $F_K|_{K=0}(s)\leq s^{-1/2}$, 
\begin{align*}
&\sum_{j=1}^m\int^{t/4}_0F_K|_{K=0}(s)^{\alpha}\int_{\IR^3} p(s,\mathbf{x},\mathbf{y})|V_{j}(\mathbf{y})|d\mathbf{y} ds\\
&\leq C_{m,Z}  \int^{t/4}_0 s^{-\alpha/2-1/2} ds  \\
&=C_{m,Z} \frac{(t/4)^{-\alpha/2+1/2}}{1/2-\alpha/2},  
\end{align*}
and likewise
\begin{align*}
&\sum_{1\leq i<j\leq m}\int^{t/4}_0F_K|_{K=0}(s)^{\alpha}\int_{\IR^3} p(s,\mathbf{x},\mathbf{y})|V_{ij}(\mathbf{y})|d\mathbf{y} ds\\
&\leq C_m\cdot \frac{(t/4)^{-\alpha/2+1/2}}{1/2-\alpha/2},  
\end{align*}
and so, using (\ref{dede}),  
\begin{align*}
B(V_{R,Z},K,\alpha,t/2)|_{K=0}\leq C_{m,Z}  \frac{(t/4)^{-\alpha/2+1/2}}{1/2-\alpha/2}e^{C_{R,Z}t},
\end{align*}
we arrive at the estimate
\begin{align*}
\left\|e^{-tH_{V_{R, Z}}}\right\|_{ L^r(\IR^{3m})\to C^{0,\alpha}(\IR^{3m})}\leq  C_{m,Z}t^{-\frac{3m}{2r}}e^{C_{R,Z}t}\Big(2^{1-\alpha}t^{-\alpha/2}+   \frac{(t/4)^{-\alpha/2+1/2}}{1/2-\alpha/2}e^{C_{R,Z}t}\Big).
\end{align*}
 
As we have remarked in the introduction, this estimate shows the geometry behind the $\alpha$-Hölder continuity of molecular eigenfunctions, while it leaves open the question behind the geometry of the Lipschitz continuity of these eigenfunctions.


\end{document}